\newif\ifsigconf
\newtheorem{theorem}{Theorem}
\newtheorem{lemma}{Lemma}[section]
\newtheorem{remark}[lemma]{Remark}
\newtheorem{definition}[lemma]{Definition}
\newcommand{\eps}{\varepsilon}
\newcommand{\varmatrix}[1]{\mathbf{#1}}
\newcommand{\ignore}[1]{}
\newcommand{\free}[1]{R(#1)}
\newcommand{\release}[1]{r_{#1}}
\newcommand{\size}[1]{d_{#1}}
\newcommand{\completion}[1]{C_{#1}}
\newcommand{\response}[1]{\rho_{#1}} 
\newcommand{\capacity}[1]{c_{#1}}
\newcommand{\mincapacity}[1]{\kappa_{#1}}
\newcommand{\flows}[1]{F_{#1}}
\newcommand{\schedule}[1]{\sigma_{#1}}
\newcommand{\instance}[1]{S_{#1}}
\newcommand{\fresponse}[1]{\Delta_{#1}}
\newcommand{\program}[2]{(#1) - (#2)}
\newcommand{\fsart}{{\sc FS-ART}}
\newcommand{\fsmrt}{{\sc FS-MRT}}
\newif\iflong
\newcommand{\ListLengths}{\setlength{\itemsep}{0ex}\setlength{\topsep}{1ex}\setlength{\partopsep}{0ex}}
\newenvironment{mylowitemize}{\begin{list}{$\bullet$}{\setlength{\leftmargin}{1em}\ListLengths}}{\end{list}}
\newif\ifnotes\notesfalse
\newcommand{\david}[1]{\textcolor{cyan}{\textrm{[David says: #1]}}}
\newcommand{\rajmohan}[1]{\textcolor{red}{\textrm{[Rajmohan says: #1]}}}
\newcommand{\david}[1]{}
\newcommand{\rajmohan}[1]{}
\begin{document}

\title{Scheduling Flows on a Switch to Optimize Response Times}

\ifsigconf
\author{Hamidreza Jahanjou}
\affiliation{\institution{Google}}
\email{hamidrj@google.com}

\author{Rajmohan Rajaraman}
\affiliation{\institution{Northeastern University}}
\email{rraj@ccs.neu.edu}

\author{David Stalfa}
\affiliation{\institution{Northeastern University}}
\email{stalfa@ccs.neu.edu}

\begin{abstract}
    We study the scheduling of flows on a switch with the goal of optimizing metrics related to the response time of the flows. The input is a sequence of flow requests on a switch, where the switch is represented by a bipartite graph with a capacity on each vertex (port), and a flow request is an edge with associated demand. In each round, a subset of edges can be scheduled under the constraint that the total demand of the scheduled edges incident on any vertex is at most the capacity of the vertex.  This class of scheduling problems has applications in datacenter networks, and has been extensively studied. Previous work has essentially settled the complexity of metrics based on {\em completion time}.  The objective of average or maximum {\em response time}, however, is more challenging. \ifnotes 
To the best of our knowledge, there are no prior approximation algorithms results for these metrics in the context of flow scheduling.
\fi 

We present approximation algorithms for flow scheduling over a switch to optimize response time based metrics.  For the average response time metric, whose NP-hardness follows directly from past work, we present an offline $O(1 + O(\log(n))/c)$ approximation algorithm for unit flows, assuming that the port capacities of the switch can be increased by a factor of $1 + c$, for any given positive integer $c$.  For the maximum response time metric, we first establish that it is NP-hard to achieve an approximation factor of better than 4/3 without augmenting capacity.  We then present an offline algorithm that achieves {\em optimal maximum response time}, assuming the capacity of each port is increased by at most $2 d_{max} - 1$, where $d_{max}$ is the maximum demand of any flow.  Both algorithms are based on linear programming relaxations.  We also study the online version of flow scheduling using the lens of competitive analysis, and present preliminary results along with experiments that evaluate the performance of fast online heuristics. 
\end{abstract}

\maketitle

\else

\author{
    Hamidreza Jahanjou \thanks{Google, Mountain View, CA, USA. Email: \texttt{hamidrj@google.com}}
\and    
    Rajmohan Rajaraman\thanks{Northeastern University, Boston, MA, USA. Email: \texttt{rraj@ccs.neu.edu}} 
\and
    David Stalfa\thanks{Northeastern University, Boston, MA, USA. Email: \texttt{stalfa@ccis.neu.edu}}
}
\date{}
\maketitle
\begin{abstract}
    
\end{abstract}
\newpage

\fi





\section{Introduction}
With the advent of software-defined networking (SDN) and OpenFlow switch protocol, routing and scheduling in modern data center networks is increasingly performed at the level of flows. A \emph{flow} is a particular set of application traffic between two endpoints that receive the same forwarding decisions. As a consequence of the shift towards centralized flow-based control, efficient algorithms for scheduling and routing of flows and their variants have gained prominent importance \cite{chowdhury+kpyy:coflow,QiuSteinZhong2015,sdn,sdn2,sdnbook}.

In order to model the datacenter network carrying the flows, it is common to represent the entire datacenter network as one non-blocking switch (see Figure \ref{fig:switch}) interconnecting all machines \cite{alizadeh,ballani,kang,QiuSteinZhong2015}. This simple model is attractive because of advances in full-bisection bandwidth topologies \cite{vl2,Niran}. In this model, every input (ingress) port is connected to every output (egress) port. Bandwidth limits are at the ports and the interconnections are assumed to have unlimited bandwidth.  We model the datacenter network as a general bipartite graph (which includes the full-bisection as a special case) with capacities at each vertex (port).

\begin{figure}[b]
\centering
\begin{tikzpicture}[scale=0.9]
\draw[thick] (0,0) -- (1.5,0) -- (1.5,2) -- (0,2) -- (0,0); 
\node at (0.25,0.8) {$\vdots$};
\node[draw, label=right:1, fill, scale=0.5] (l1) at (0,1.8) {};
\node[draw, label=right:2, fill, scale=0.5] (l2) at (0,1.5) {};
\node[draw, label=right:3, fill, scale=0.5] (l3) at (0,1.2) {};
\node[draw, label=right:$m$, fill, scale=0.5] (lm) at (0,0.2) {};
\node at (1.25,0.8) {$\vdots$};
\node[draw, label=left:1, fill=red, scale=0.5] (r1) at (1.5,1.8) {};
\node[draw, label=left:2, fill=orange, scale=0.5] (r2) at (1.5,1.5) {};
\node[draw, label=left:3, fill=green, scale=0.5] (r3) at (1.5,1.2) {};
\node[draw, label=left:$m$, fill=blue, scale=0.5] (rm) at (1.5,0.2) {};

\node at (0.75,1) {$\times$};

\draw[fill=blue,blue] (-1.2,1.2) rectangle (-0.5,1.4);
\draw[fill=orange,green] (-1.8,1) rectangle (-0.5,1.2);
\draw [->,thick] (-0.5,1.2) -- (l3);
\node at (-1.6,1.6) {$f^1_{_{3\rightarrow m}}$};
\node at (-2.2,1.1) {$f^2_{_{3\rightarrow 3}}$};

\draw[fill=orange,blue] (-1.6,0) rectangle (-0.5,0.2);
\draw[fill=blue,red] (-1.2,0.2) rectangle (-0.5,0.4);
\draw [->,thick] (-0.5,0.2) -- (lm);
\node at (-1.6,0.6) {$f^3_{_{m\rightarrow 1}}$};
\node at (-2.1,0.1) {$f^4_{_{m\rightarrow m}}$};
\node[draw,circle,label=left:1] (nl1) at (3,1.8) {};
\node[draw,circle,label=left:2] (nl2) at (3,1.25) {};
\node (nl3) at (3,0.8) {$\vdots$};
\node[draw,circle,label=left:$m$] (nl4) at (3,0.2) {};

\node[draw,circle] (pl1) at (4,1.8) {};
\node[draw,circle] (pl2) at (4,1.25) {};
\node (pl3) at (4,0.8) {$\vdots$};
\node[draw,circle] (pl4) at (4,0.2) {};

\node[draw,circle] (pr1) at (5,1.8) {};
\node[draw,circle] (pr2) at (5,1.25) {};
\node (pr3) at (5,0.8) {$\vdots$};
\node[draw,circle] (pr4) at (5,0.2) {};

\node[draw,circle,label=right:1] (nr1) at (6,1.8) {};
\node[draw,circle,label=right:2] (nr2) at (6,1.25) {};
\node (nr3) at (6,0.8) {$\vdots$};
\node[draw,circle,label=right:$m$] (nr4) at (6,0.2) {};

\draw[->] (pl1) -- node [midway,above] {\small$\infty$} (pr1);
\draw[->] (pl1) -- (pr2);
\draw[->] (pl1) -- (pr4);

\draw[->] (pl2) -- (pr1);
\draw[->] (pl2) -- (pr2);
\draw[->] (pl2) -- (pr4);

\draw[->] (pl4) -- (pr1);
\draw[->] (pl4) -- (pr2);
\draw[->] (pl4) -- node [midway,below] {\small$\infty$} (pr4);

\draw[->] (nl1) -- node [midway,above] {\small$1$} (pl1);
\draw[->] (nl2) -- (pl2);
\draw[->] (nl4) -- node [midway,below] {\small$1$} (pl4);

\draw[->] (pr1) -- node [midway,above] {\small$1$} (nr1);
\draw[->] (pr2) -- (nr2);
\draw[->] (pr4) -- node [midway,below] {\small$1$} (nr4);
\end{tikzpicture}
\caption{\label{fig:switch}(left) An $m\times m$ non-blocking switch with unit port capacities. Each incoming flow is shown as a bar on the left, with the length of the bar proportionate to the flow size. Each flow also specifies its input and output ports. 
For instance, two flows $f^1$ and $f^4$ share the same destination port.  
(right) The switch can be regarded as a complete $m\times m$ bipartite graph augmented with two sets of parallel edges.}
\end{figure}

In the context of scheduling and client-server applications,
{\em response time}--also known as flow time or sojourn time--is a very natural and important objective.  Indeed, response time is directly related to quality of service experienced by clients \cite{bansal_thesis,whyresponsetime}.  In the job scheduling literature, metrics related to response times have been extensively studied in diverse frameworks, including approximation algorithms~\cite{Bansal2015flowtime,batra+gk:flow,chekuri+kz:schedule,feige+kl:flow,kellerer+tw:schedule}, competitive analysis~\cite{bansal+chan,kalyanasundaram+p:schedule_nonclairvoyant,Mastrolilli2003}, and queuing-theoretic analysis~\cite{biersack+su:schedule,grosof+sh:srpt}.  For flow scheduling, however, response time optimization is not as well-understood as completion time optimization; to the best of our knowledge, there is no prior work on approximation algorithms for flow scheduling to optimize response time metrics.  In this paper, we study the problem of scheduling flows on a switch network to minimize average response time and maximum response time.

    \subsection{Results}
    We present approximation algorithms for flow scheduling on a bipartite switch network to minimize response time metrics.
\begin{itemize}[leftmargin=5mm]
    \item We present a $(1 + c, O(\log n/c))$-approximation algorithm, running in polynomial time, for scheduling $n$ {\em unit flows}\/
    under the average response time metric, for any given positive integer $c$; that is, our algorithm achieves an average response time of $O(\log n)/c$ times the optimal assuming it is allowed port capacity that is $1 + c$ times that of the original.  Our results on average response time appear in Section~\ref{sec:avgresponsetime}.
    \item We show that it is NP-hard to attain an approximation factor smaller than $4/3$ for the maximum response time metric.  We next present a polynomial-time algorithm that achieves {\em optimal}\/ maximum response time, assuming it is allowed port capacity that is at most $2d_{max} - 1$ more than that of the optimal, where $d_{max}$ is the maximum demand of any flow request.  For the special case of unit demands, note that this is best possible, given the hardness result.  Our results on maximum response time appear in Section~\ref{sec:maxresponsetime}.
\end{itemize}
Both of our algorithms are based on rounding a suitable linear programming relaxation of the associated problem.  The algorithm for average response time uses the iterative rounding paradigm, along the lines of previous work in scheduling jobs on unrelated machines~\cite{Bansal2015flowtime}.  A challenge we need to address is that a "job" in flow scheduling uses two different capacitated "resources" (ports) simultaneously.  We are able to overcome this challenge if we allow resource augmentation.  An important open problem is to determine whether polylogarithmic- or better approximations for average response time are achievable without resource augmentation.

For maximum response time, our hardness reduction is through the classic Timetable problem~\cite{timetable} and provides a useful target for practitioners developing heuristics.  Our approximation algorithm is achieved by applying a rounding theorem of~\cite{Karp87globalwire}, and in fact extends to the more general problem in which we need to meet {\em distinct deadlines}\/ for individual flows\iflong, as opposed to a uniform maximum response time\fi.

Both the algorithms above are offline approximations.  In Section~\ref{sec:online}, we study online algorithms for response time metrics.
\begin{itemize}[leftmargin=5mm]
    \item We present preliminary theoretical results including a resource-augmented constant-factor competitive algorithm for maximum response time, which builds on our offline algorithm.  We next present experimental evaluations of natural online heuristics for average and maximum response time metrics.
\end{itemize}
Our work leaves some intriguing open problems and several directions for future research, which are highlighted in Section~\ref{sec:open}.
    
    \subsection{Related Work}
    There is considerable work on scheduling flows on non-blocking switch networks as well as more general topologies, primarily for completion time metrics.  There is extensive literature on scheduling matchings over high-speed crossbar switches; these studies largely adopt a queuing theoretic framework (e.g., see~\cite{giaccone+ps:switch,gong+tlyx:switch,shah+s:schedule}). In \cite{ChowdhuryZhongStoica2014}, Chowdhury et al. present effective heuristics for scheduling generalizations of flows, called co-flows, without release times on a non-blocking switch network. 
More recently, Luo et al. \cite{luo+foerster+etal:splitcastreconfig} provide heuristics for scheduling multicast flows over a reconfigurable switch.
Approximation algorithms for average completion time of co-flows on a non-blocking switch are given in~\cite{Ahmadi2017,KM-coflow-SPAA16,shafiee+g:coflow,DBLP:conf/spaa/QiuSZ15}.  Scheduling over general network topologies is studied in~\cite{chowdhury+kpyy:coflow,Jahanjou_spaa,rapier}, including approximation algorithms for average completion time.

\subsubsection*{Average response time}
The single machine preemptive case with release times, $1|pmtn,q_i|\sum_i R_i$, is solvable in polynomial time using the shortest remaining processing time (SRPT) rule \cite{baker:schedule}. Without preemption, $1||\sum_i R_i$ is solvable using the shortest processing time (SPT) rule; but, $1|q_i|\sum_i R_i$ is \ignore{$\mathbf{NP}$-hard \cite{lenstra1977sequencing}. Furthermore, Kellerner et al. show that this case is} hard to approximate within a factor of $n^{\frac{1}{2}-\epsilon}$ for all $\epsilon>0$ \cite{kellerer+tw:schedule}. \ignore{In the same paper, the authors give an $O(\sqrt{n})$ approximation algorithm as well.} For two machines or more, $P2|pmtn,q_i|\sum_i R_i$ is $\mathbf{NP}$-hard \cite{du+ly:schedule}. Leonardi and Raz show that SRPT is an $O(\log(\min(\frac{n}{m},P)))$-competitive algorithm for the problem $Pm|pmtn,q_i|R_i$ where $P$ is the ratio between the largest and the smallest job processing times \cite{leonardi+r:schedule}. \ignore{They also establish an $\omega(\log P)$ lower bound on the competitiveness of any randomized online algorithm. Awerbuch et al. give an online non-migratory queue-based algorithm algorithm (for the same problem) with a competitive ratio of $O(\min(\log P, \log n))$ \cite{awerbuch+alr:schedule}. Chekuri, Khanna, and Zhu present a simple $O(\min(\log P, \log\frac{n}{m}))$-competitive non-migratory queue-based algorithm \cite{chekuri+kz:schedule}. Avrahami and Azar obtain the same bound with a new non-migratory algorithm which does not need a queue \cite{Avrahami}.}  From a technical standpoint, a related paper for our work is that of Garg and Kumar, who consider the problem of minimizing total response time on related machines ($Q|pmtn,q_i|\sum_i R_i$) and present an offline $O(\log P)$-approximation algorithm and an online $O(\log^2 P)$-competitive algorithm \cite{Garg2006}. In a later paper, the same authors consider the problem of minimizing total response time on multiple identical machines where each job can be assigned to a specified subset of machines. They give an $O(\log P)$-approximation algorithm as well as an $\Omega(\frac{\log P}{\log\log P})$ lower bound \cite{Garg2007}. The same ideas were used to get an $O(k)$-approximation algorithm for the unrelated case ($R|pmtn,q_i|\sum_i R_i$) when there are $k$ different processing times \cite{Garg2008}. In the same paper, the authors showed an $\Omega(\log^{1-\epsilon}P)$ hardness of approximation for $P|pmtn,q_i|\sum_i R_i$. More recently, Bansal and Kulkarni design an $O(\min(\log^2 n,\log n\log P))$-approximation algorithm for $R|pmtn,q_i|\sum_i R_i$, which provides a basis for our algorithm for average response time~\cite{Bansal2015flowtime}.

Independently, Dinitz and Moseley \cite{dinitz+moseley:schedulingflowsreconfig} have recently studied online scheduling of flows in reconfigurable networks and
provide an $O(1/\eps^2)$-competitive algorithm, assuming that the speed of each machine \ignore{in the algorithm} is $2+\eps$ times that in an optimal solution.  
\iflong
Although their model, generalized to multi-graphs, captures our model, their notion of speed augmentation is subtly different from our notion of port capacity augmentation. 
For instance, while augmenting speed always leads to faster completion of a job being currently scheduled, increasing the port capacity allows more unit flows to be scheduled in the same round, but each unit flow still takes the full round to complete.  This entails that in our model, port capacities are always integers, while in their model, speeds can take real values.
As a consequence, their result implies an $O(1/c^2)$-competitive algorithm for average response time in our model, assuming a $(2 + c)$ factor blowup in port capacity, for any given positive integer $c$.  In contrast, our result for average response time requires a $(1 + c)$-factor blowup, for any given positive integer $c$, but only holds for the offline model and has a logarithmic approximation ratio.
\else 
One consequence of their result is an $O(1/c^2)$-competitive algorithm for average response time in our model, assuming a $(2 + c)$ factor blowup in port capacity, for any positive integer $c$.  In contrast, our result for average response time requires a $(1 + c)$-factor blowup, for any positive integer $c$, but incurs a logarithmic approximation ratio and holds only for the offline model.  We refer the reader to the full paper for a comparison of our models~\cite{dinitz+moseley:schedulingflowsreconfig}.
\fi
\ignore{With weights, the problem becomes considerably more difficult. As a matter of fact, $1|pmtn,q_i|\sum_i \omega_i R_i$ is $\mathbf{NP}$-complete \cite{Lenstra1977343} and $P|pmtn,q_i|\sum_i \omega_i R_i$ is $\mathbf{APX}$-hard \cite{chekuri+k:schedule}. Chekuri and Khanna present a semi-online $O(\log^2 P)$-competitive algorithm and a quasi-polynomial approximation scheme (QPTAS) for $1|pmtn,q_i|\sum_i \omega_i R_i$ \cite{chekuri+k:schedule}. Bansal and Dhamdhere give an $O(\log n + \log P)$-approximation and an $O(\log W)$-competitive algorithm for the same problem, where $W$ is the maximum to minimum ratio of weights~\cite{Bansal+d}. Bansal and Chan show an $\omega(1)$ lower bound on the competitive ratio of any deterministic online algorithm for $1|pmtn,q_i|\sum_i \omega_i R_i$ \cite{bansal+chan}. In a very recent breakthrough, Feige et. al, using results of Batra et. al, presented a polynomial-time algorithm with a $O(1)$ approximation ratio for minimizing weighted response time on a single machine~\cite{batra,feige+kl:flow}.}

\subsubsection*{Maximum response time}
The problem of minimizing maximum response time has not been studied extensively. $P|pmtn,q_i|R_{\max}$ is polynomial-time solvable \cite{Lawler1978}. The first-in first-out (FIFO) heuristic is known to be $(3-\frac{2}{m})$-competitive for $Pm|pmtn,q_i|R_{\max}$ and $Pm|q_i|R_{\max}$ \cite{Mastrolilli2003,bender+cm:schedule}. On the other hand, Ambühl and Mastrolilli give a $(2-\frac{1}{m})$-competitive algorithm for $Pm|pmtn,q_i|R_{\max}$ and show that FIFO achieves the best possible competitive ratio on two identical machines when preemption is not allowed \cite{AMBUHL2005597}. \cite{Bansal2015flowtime} gives an $O(\log n)$-approximation algorithm for $R|pmtn,q_i|R_{\max}$ . 

\section{Problem Definitions and Notation} 
\label{sec:problemstatement}
We consider two scheduling problems in which flows arrive in fixed intervals on a non-blocking switch. In this model, we are given a switch $\instance{m,m'} = (P,\flows{})$ where $P$ is a set of $m$ \textit{input} ports and $m'$ \textit{output} ports where each port $p$ has a corresponding \textit{capacity} $\capacity{p}$. $\flows{}$ is a set of flows $e = pq$ with one input port $p$ and one output port $q$.  Each flow $e$ has a corresponding \textit{demand} $\size{e}$ and \textit{release time} $\release{e}$. We assume throughout that for any $e = pq$, $\size{e} \le \mincapacity{e} = \min(c_p,c_q)$.

For an given instance $\instance{m,m'}$, we define a family of functions $\schedule{}:F \times \mathbb{N} \to \{0,1\}$. We say that $\schedule{}$ \textit{schedules} flow $e$ in \textit{round} $t$ if $\schedule{e,t} =1$ (for ease of notation, we use $\schedule{e,t} \equiv \sigma(e,t)$). A function $\schedule{}$ is a \textit{schedule} of $\instance{m,m'}$ if the following conditions are met: every flow $e$, is entirely scheduled across all rounds (i.e. $\sum_t \schedule{e,t} \ge 1$), every flow $e$ is scheduled only in rounds after its release time (i.e. for all $t$, $\schedule{e,t} = 1 \Rightarrow t \ge \release{e}$), and for all ports $p$ the total size of all flows scheduled on port $p$ in a given round is no more than $p$'s capacity (i.e. for all $t$, $\sum_{e:p\in e} \size{e}\schedule{e,t} \le \capacity{p}$). 
For a given flow $e$ and schedule $\schedule{}$, the \textit{response time} $\response{e}$ is the difference in its \textit{completion time} $\completion{e} = 1 + \min \{t : \schedule{e,t} = 1\}$ and its release time, i.e. $\response{e} = \completion{e} - \release{e}$.

The first problem we study in this model is Flow Scheduling to Minimize Average Response Time (\fsart) in which we seek to minimize $\sum_{e \in \flows{}} \completion{e} - \release{e}$. 
The second problem we study in this model is Flow Scheduling to Minimize Maximum Response Time (\fsmrt) in which we seek to minimize $\max_{e \in \flows{}} \{\completion{e} - \response{e}\}$. 
 
\ignore{
We also study a continuous version of the model described above, in which a schedule $\schedule{} : F \times \mathbb{N} \to [0,1]$ assigns some fraction of each job to each round subject to the following conditions: $\sum_{t} \schedule{e,t} \ge 1$ for all flows $e$, $\sum_{e:p \in e} \schedule{e,t} \le \capacity{p}$ for all rounds $t$, and if $\schedule{e,t} > 0$ then $t \in \free{e}$. In the continuous model, we consider both average and maximum response time objectives.
}

Throughout the paper we use $pq$ to denote a flow (directed edge) from input port $p$ to output port $q$. We use $[i]$ to denote the set of positive integers less than or equal to $i$.  An instance with equal numbers of input and output ports is referred to as $\instance{m}$. The main notation is given in the table below.

\smallskip

\ifsigconf
\hspace*{-\parindent}%
\begin{minipage}[t]{.196\textwidth}%
    \noindent\begin{tabular}{@{}|ccl|}
        \hline
        $\instance{m,m'}$ &:& $m$-in, $m'$-out \\ 
        $P$ &:& all ports  \\
        $\flows{}$ &:& all flows  \\
        $n$ &:& $|F|$\\
        \hline
        $p,q$ &:& port\\
        $\capacity{p}$  &:& $p$'s capacity\\
        $\mincapacity{pq}$ &:& $\min\{\capacity{p}, \capacity{q}\}$\\
        $\flows{p}$ &:& all $e : p \in e$ \\
        \hline
    \end{tabular}
\end{minipage}%
\begin{minipage}[t]{.3\textwidth}
    \begin{tabular}{|ccl|}
        \hline
        $e$, $pq$ &:& flow \\
        $\size{e}$ &:& $e$'s demand \\
        $\release{e}$ &:& $e$'s release time  \\
        $\response{e}$ &:& $e$'s response time \\
        $\completion{e}$ &:& $e$'s completion time\\
        \hline
        $t$ &:& round \\
        $\sigma{}$ &:& schedule \\
        $\schedule{e,t} = 1$ &$\Leftrightarrow$& $e$ scheduled at $t$ \\
        \hline
    \end{tabular}
\end{minipage}

\else 

\begin{center}
\begin{minipage}[t]{42.5mm}%
    \noindent\begin{tabular}{@{}|ccl|}
        \hline
        $\instance{m,m'}$ &:& $m$-in, $m'$-out \\ 
        $P$ &:& all ports  \\
        $\flows{}$ &:& all flows  \\
        $n$ &:& $|F|$\\
        \hline
        $p,q$ &:& port\\
        $\capacity{p}$  &:& $p$'s capacity\\
        $\mincapacity{pq}$ &:& $\min\{\capacity{p}, \capacity{q}\}$\\
        $\flows{p}$ &:& all $e : p \in e$ \\
        \hline
    \end{tabular}
\end{minipage}%
\begin{minipage}[t]{60mm}
    \begin{tabular}{|ccl|}
        \hline
        $e$, $pq$ &:& flow \\
        $\size{e}$ &:& $e$'s demand \\
        $\release{e}$ &:& $e$'s release time  \\
        $\response{e}$ &:& $e$'s response time \\
        $\completion{e}$ &:& $e$'s completion time\\
        \hline
        $t$ &:& round \\
        $\sigma{}$ &:& schedule \\
        $\schedule{e,t} = 1$ &$\Leftrightarrow$& $e$ scheduled at $t$ \\
        \hline
    \end{tabular}
\end{minipage}
\end{center}

\fi

\section{Average Response Time} 
\label{sec:avgresponsetime}
We study Flow Scheduling to Minimize Average Response Time (\fsart), for instances with identical numbers of input and output ports. Specifically, we assume each instance is an $m \times m$ switch $\instance{m}$.

From a complexity viewpoint, \fsart\ generalizes classic scheduling problems.  The special case of \fsart\ with arbitrary demands, unit capacity, and $m = 1$
 is equivalent to preemptive single-machine scheduling with release times, which is strongly $\mathbf{NP}$-hard when the objective is weighted sum of completion times ($1|r_i; pmtn|\sum w_i c_i$). Note that, $1|r_i; pmtn|\sum c_i$ is polynomial-time solvable while the complexity of $1|r_i; \sigma_i = \sigma; pmtn|\sum w_i c_i$ is still open.

For $m > 1$, \fsart\ instances incur coupling issues, even for unit demands. Each flow requires resources at two ports \emph{simultaneously}\iflong, which makes the problem harder in a different way\fi.  In \cite{chromatic-sched}, the authors consider the closely related {\em biprocessor scheduling}\/ problem: there are $m$ identical machines and $n$ unit-sized jobs which require simultaneous use of two pre-specified (dedicated) machines. The objective is to minimize total completion time of jobs. The hardness of this problem is related to the graph that arises from the pre-specified machine pairs (machines correspond to nodes and edges to jobs). It is shown in~\cite{chromatic-sched} that the problem is strongly NP-hard if the graph is cubic, and remains NP-hard if the graph is bipartite and subcubic (i.e. $\forall v : deg(v) \leq 3$), which implies that \fsart\ is NP-hard even for unit demands and unit capacities and identical release times for all flows. While constant-factor approximations~\cite{chromatic-sched,kub-kraw} are known for makespan and average completion time, no results are known for response time metrics. 

\ignore{
On the positive side, \david{is this problem equivalent to ours if the graph is bipartite?}
\begin{itemize}
\item There exists a greedy 2-approximation algorithm for this problem (for any graph).
\item If the bipartite graph is such that the degree of all vertices (except possibly one) is the same, then a polynomial-time algorithm exist.
\item If the graph is a tree, there exists a polynomial time algorithm.
\item If the graph is a star or path, polynomial-time algorithms exist even when the job sizes vary.
\end{itemize}

To summarize, our problem with unit-size flows and unit port capacities is strongly NP-had even when all flows have the same release time and the objective function is unweighted. On the positive side, the 2-approximation algorithm might be useful as a sub-routine.

Finally, we note that if the objective function is makespan $C_{\max}$, the problem is still $NP$-hard (even in the case of unit-sized jobs); however, a 4-approximation algorithm exists \cite{kub-kraw}.}

Section~\ref{sec:linear_prog_approach} presents a linear programming approach based on iterative rounding,\iflong building on prior work on unrelated machines\fi.  Section~\ref{sec:validschedule} uses this approach to establish the main approximation result of this section. 

    
    
    \subsection{A linear-programming approach}
    \label{sec:linear_prog_approach}
    In this section, we investigate linear programming approaches used in the context of machine scheduling and adapt them to our setting. On a conceptual level, our problem is harder than parallel/related/unrelated machine scheduling in the sense that we have to deal with simultaneous use of ports, but is easier in the sense that we do not have to worry about the assignment of flows/jobs to machines as each flow specifies its source and destination ports.

Our starting point is the following linear program similar to the one used by Garg and Kumar~\cite{Garg2006}.

\vspace{-3mm}
\begin{align} 
\label{eq:lp_gk_s}
\text{Minimize}\ &\sum_e \sum_{t\geq \release{e}} \Big(\frac{t-\release{e}}{\size{e}}+\frac{1}{2\mincapacity{e}} \Big)\ b_{et} &\text{subject to}
\\
&\sum_{t\geq \release{e}} b_{et} \geq \size{e} &\forall e
\label{eq:request_const}
\\
&\sum_{e \in \flows{p}} b_{et} \leq \capacity{p} &\forall p, t \label{eq:port_const}
\\
&b_{et} \geq 0 & \forall e, t \label{eq:lp_gk_f}
\end{align}

Informally, the variable $b_{et}$ gives the amount of flow $e$ that is scheduled in round $t$.
Constraint~(\ref{eq:request_const}) ensures that each flow is completed. Constraint~(\ref{eq:port_const}) ensures that no port is overloaded in any round. We can rewrite the objective function as $\sum_e \fresponse{e}$ where 
\begin{equation*}
\fresponse{e} = \sum_{t\geq \release{e}} \Big(\frac{t-\release{e}}{\size{e}}+\frac{1}{2\mincapacity{e}}\Big)\ b_{et}
\label{eq:frac_res_time}
\end{equation*}
is the \textit{fractional response time} of $e$. We show that, for a given instance $\instance{n,m}$ of \fsart, the optimal solution to \program{\ref{eq:lp_gk_s}}{\ref{eq:lp_gk_f}} lower bounds the total response time of any schedule of $\instance{n,m}$.

\begin{lemma}
    For an arbitrary $\instance{n,m}$, let $\schedule{}$ be some (non-integral) schedule of $\instance{n,m}$ and let $b^*$ ($\fresponse{e}^*$) be the optimal solution to \program{\ref{eq:lp_gk_s}} {\ref{eq:lp_gk_f}} corresponding to $\instance{n,m}$. Then $\sum_e \fresponse{e}^* \le \sum_{e} \response{e}$. 
\end{lemma}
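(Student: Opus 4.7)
The plan is to exhibit an LP-feasible vector $b$ constructed directly from $\sigma$ whose objective value is at most $\sum_e \response{e}$; combined with the optimality of $b^*$, this would yield $\sum_e \fresponse{e}^* \le \sum_e \fresponse{e}(b) \le \sum_e \response{e}$.

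For each flow $e$, I would let $t^*_e = \min\{t : \schedule{e,t} = 1\}$ be the first round in which $\sigma$ schedules $e$, so that $\completion{e} = t^*_e + 1$ and $\response{e} = t^*_e + 1 - \release{e}$. I would then define $b_{et} = \size{e}$ if $t = t^*_e$ and $b_{et} = 0$ otherwise. Feasibility is straightforward to check: constraint~(\ref{eq:request_const}) holds with equality since $\sum_t b_{et} = \size{e}$; for constraint~(\ref{eq:port_const}), the only flows $e \in \flows{p}$ contributing at round $t$ are those with $t^*_e = t$, and each such flow satisfies $\schedule{e,t} = 1$, so
\begin{equation*}
\sum_{e \in \flows{p}} b_{et} \;\le\; \sum_{e \in \flows{p} : \schedule{e,t} = 1} \size{e} \;\le\; \capacity{p}
\end{equation*}
because $\sigma$ itself respects port capacities. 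Nonnegativity is immediate, and $b_{et}$ is supported on $t = t^*_e \ge \release{e}$ since $\sigma$ honors release times.

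The crux is then evaluating the LP objective at this $b$:
\begin{equation*}
\fresponse{e}(b) \;=\; \Big(\frac{t^*_e - \release{e}}{\size{e}} + \frac{1}{2\mincapacity{e}}\Big)\,\size{e} \;=\; (t^*_e - \release{e}) + \frac{\size{e}}{2\mincapacity{e}}.
\end{equation*}
Since the problem data enforce $\size{e} \le \mincapacity{e}$, the second term is at most $1/2$, hence at most $1$, so $\fresponse{e}(b) \le t^*_e + 1 - \release{e} = \response{e}$. Summing over $e$ and invoking LP-optimality of $b^*$ completes the argument.

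The proof is essentially a verification once the right mapping from $\sigma$ to $b$ is identified; the one nontrivial observation is that the LP's correction term $\size{e}/(2\mincapacity{e})$ is calibrated to lie in $[0,1/2]$, providing exactly the slack needed to absorb the off-by-one gap between the ``first scheduled round'' $t^*_e$ and the ``completion time'' $t^*_e + 1$. If the lemma is intended to cover fractional schedules as well, the construction generalizes by placing the mass $\size{e}$ at the earliest round in which the cumulative scheduled fraction of $e$ first reaches $1$, and the same $1/2$-slack argument goes through without change.
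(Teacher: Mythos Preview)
Your proof is correct and follows the same high-level strategy as the paper: exhibit a feasible LP vector derived from $\sigma$ and bound its objective flow-by-flow by $\response{e}$. The execution differs slightly. You place all of $e$'s mass at the single round $t^*_e$ and compute $\fresponse{e}(b)$ directly as $(t^*_e-\release{e}) + \size{e}/(2\mincapacity{e}) \le \response{e} - 1/2$. The paper instead argues that, among all ways of allocating $\size{e}$ units no later than $\completion{e}$, the fractional response time is maximized by packing at rate $\mincapacity{e}$ in the final $\size{e}/\mincapacity{e}$ rounds, and then evaluates that worst case to the same $\response{e}-1/2$ bound. Your route is the more direct of the two; the paper's extremal argument is what one would want if $\sigma$ were genuinely fractional and spread $e$ over several rounds, which the lemma's parenthetical seems to gesture at.
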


\ignore{
We want to show that the value of an optimal solution the the above LP is a lower bound on the total integral response time. Suppose $\schedule{}$ is a valid schedule and let $\response{e}(S)$ denote the integral response time of flow $f$ according to $S$. Let $b_S$ denote the feasible solution to the LP corresponding to $S$.
\begin{lemma}
$\forall f : \Delta_f(b_S) \leq \rho_f(S)$.
\end{lemma}
}

\begin{proof}
Given $\schedule{}$, we construct a solution to \program{\ref{eq:lp_gk_s}}{\ref{eq:lp_gk_f}} by setting $b_{et} \leftarrow (1/\size{e}) \schedule{e,t}$, for all flows $e$ and rounds $t$.
To prove the lemma, we prove the stronger claim that, for any flow $e$, $\fresponse{e} \le \response{e}$.

Suppose that the completion time of flow $e$ in schedule $\schedule{}$ is $\completion{e}$. Then the response time of $e$ is $\response{e} = \completion{e} - \release{e}$. Notice that 
\begin{equation*}
\fresponse{e} = \sum_{t = \release{e}}^{\completion{e}}  \Big(\frac{t-\release{e}}{\size{e}}+\frac{1}{2\mincapacity{e}}\Big)\ b_{et}\ \leq\ \sum_{t = \completion{e}-\size{e}/\mincapacity{e}}^{\completion{e}} \Big(\frac{t-\release{e}}{\size{e}}+\frac{1}{2\mincapacity{e}}\Big)\ \mincapacity{e}.
\end{equation*}
That is, $\fresponse{e}$ is maximized when as much of flow $e$ is scheduled in each round as possible to ensure that $e$ completes in round $\completion{e}$. But,
\begin{align*}
\sum_{t = \completion{e} - \size{e}/\mincapacity{e}}^{\completion{e}} \Big( \frac{t- \release{e}} {\size{e}}+\frac{1}{2\mincapacity{e}} \Big)\ \mincapacity{e} &= \sum_{t = 1}^{\size{e}/\mincapacity{e}} \Big(\frac{\completion{e}-\release{e}-t}{\size{e}} + \frac{1}{2\mincapacity{e}}\Big)\ \mincapacity{e}
\\
&= \completion{e} - \release{e} - \frac{1}{2} \leq \response{e}
\end{align*}
 which completes the proof.
\end{proof}

\iflong
\begin{remark}
\emph{We note that the optimal solution to \program{\ref{eq:lp_gk_s}}{\ref{eq:lp_gk_f}} yields a non-integral schedule which optimizes average response time. 
\ignore{
It is worthwhile to note that the optimal solution to the LP is a valid schedule in the continuous model. Recall that in the continuous model, a schedule consists of flow bandwidth functions respecting port capacities.}
Importantly, the solution already takes care of the resource coupling issue (between ports) for us. Unfortunately, it is not clear what is the gap between the LP's objective function and the true total response time.}
\end{remark}
\fi 
We now consider another linear programming formulation first used by Bansal and Kulkarni \cite{Bansal2015flowtime} for the problem of job scheduling on unrelated machines. The authors use  iterative rounding to get a tentative schedule with low additive overload for any interval of time. We do the same. This linear program and the subsequent ones, used in iterative rounding, are all interval-based. In the initial program, which we denote $LP(0)$, the interval size is 4.  In subsequent relaxations, the interval size can grow. $LP(0)$ is the following program along with constraints (\ref{eq:request_const}) and (\ref{eq:lp_gk_f}). 

\ignore{
\begin{remark}
\emph{In what follows, we make two simplifying assumptions. First, all flows are unit sized. Second, all port capacities are unit. Later we show how to remove these restrictions.}
\end{remark}
}

\vspace{-3mm}
\begin{align}
\label{lp:avg0_start}
\text{Minimize}\ &\sum_e \sum_{t\geq \release{e}} \Big( \frac{t-\release{e}}{\size{e}}+\frac{1}{2}\Big)\ b_{et} &\text{subject to} 
\\
&\sum_{e \in \flows{p}} \, \sum_{t \in (4(a-1),4a]} b_{et} \leq 4\capacity{p} &\forall p, a 
\label{lp:avg0_capacity}
\end{align}

As before, the real variable $b_{et}$ is the amount of flow $e$ scheduled in round $[t,t+1)$. Constraint (\ref{lp:avg0_capacity}) ensures that the total sum of flows scheduled on a given port $p$ in any four consecutive rounds is no more than four times the capacity of $p$. Clearly, this new LP is a relaxation of the previous one; consequently, the value of an optimal solution to this LP is a lower bound to the response time for any integral schedule.
Following \cite{Bansal2015flowtime}, we use an iterative rounding scheme to get the following result.

\begin{lemma}\label{lem:lpsolution}
The exists a solution $b^{*} = \{b^{*}_{et}\}_{e,t}$ satisfying the following properties
\begin{enumerate}
    \item For each flow $e$, there is exactly one round $t$ for which $b^{*}_{et}=\size{e}$.
    \item The cost of $b^{*}$ is at most that of an optimal solution to the LP.
    \item For any port $p$ and any time interval $[t_1,t_2]$,
    \begin{equation*}
    \sum_{e \in \flows{p}}\ \sum_{t \in [t_1,t_2]} b^{*}_{et}\ \leq\ \capacity{p}(t_2-t_1) + O(\capacity{p} \log n). 
    \end{equation*}
\end{enumerate}
\end{lemma}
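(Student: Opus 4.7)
The plan is to adapt the iterative rounding framework of Bansal and Kulkarni~\cite{Bansal2015flowtime} for unrelated-machine scheduling to our flow-scheduling LP. First, I would set up a sequence of LPs indexed by dyadic levels $k = 0, 1, \ldots, L$ with $L = O(\log n)$. At level $k$, the LP imposes interval capacity constraints on all dyadic intervals of sizes $4, 8, \ldots, 2^{k+2}$ (together with the completion and non-negativity constraints of $LP(0)$). The intervals across all levels form a laminar family, which is the key structural property exploited below.

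Starting from an optimal solution to $LP(0)$, I would iteratively refine the solution. At each stage $k$, I find a basic feasible solution of the current LP whose cost is at most the cost of an optimal solution to $LP(0)$. Any variable $b_{et}$ that has become integral, i.e., $b_{et} \in \{0, \size{e}\}$, is frozen, and the residual LP at the next coarser level is then solved. After $L$ stages the residual LP retains only the completion constraints, which force every flow to be integrally scheduled in a single round, giving property~1. Property~2 is maintained by always picking the basic feasible solution of minimum cost at each stage, and by the observation that coarser-level LPs are relaxations of finer-level ones, so their optima can only be smaller.

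The heart of the argument is a sparsity bound for extreme points. In a basic feasible solution, the number of non-zero variables equals the number of tight non-trivial constraints. Here the tight non-trivial constraints are of two types: flow-completion constraints (one per flow) and interval capacity constraints (at most two per flow per interval, one per port). Via a charging argument in which each fractional variable is charged to a tight capacity constraint in the laminar family, I would show that at each level $k$ at most $O(c_p)$ flows per port per level-$k$ interval remain split across sub-intervals. Committing each such fractional flow to a single sub-interval introduces $O(c_p)$ additive overload per port per level-$k$ interval. Any query interval $[t_1, t_2]$ is covered by $O(1)$ dyadic intervals at each level, so summing over $O(\log n)$ levels yields the additive $O(c_p \log n)$ bound in property~3.

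The main obstacle, and the point where the analysis departs from Bansal--Kulkarni, is the extreme point counting under two-port coupling. In unrelated-machine scheduling, each job contributes a capacity constraint on a single machine, giving a clean bipartite incidence structure between jobs and machine intervals. Here, each flow $e = pq$ is incident to two port-capacity constraints per interval simultaneously, effectively doubling the per-interval constraint participation. Establishing the required sparsity bound requires carefully analyzing the bipartite incidence graph between fractional flows and tight port--interval constraints and arguing that, despite the doubled participation, a forest-like residual structure still emerges so that the per-interval fractional support remains $O(c_p)$.
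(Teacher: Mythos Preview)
Your overall framework is right---iterative rounding in the style of Bansal--Kulkarni, with the two-port coupling as the new wrinkle---but the specific decomposition you propose differs from the paper's in a way that creates a real gap. You use a \emph{fixed} laminar family of dyadic time-intervals and plan to relax from fine to coarse over $O(\log n)$ levels. The paper instead defines intervals \emph{adaptively} at each iteration: for each port $p$, it sorts the surviving variables $b_{et}$ by $t$ and greedily groups consecutive ones until their total \emph{value} first exceeds $4c_p$, so every interval has size in $[4c_p,5c_p)$ measured in the current solution. This adaptive definition is exactly what makes the counting work: with $2$ tokens per surviving flow distributed in proportion to $b_{et}/d_e$, each tight port-interval constraint collects at least $4$ tokens (since its size is $\ge 4c_p$ and $d_e\le c_p$), while each flow hands out exactly $2$ tokens (because it touches exactly two ports). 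Hence the number of tight capacity constraints is at most half the number of surviving flows, which combined with the extreme-point equality $|E(\ell-1)|\le |F(\ell-1)|+|P(\ell-1)|$ gives $|F(\ell)|\le |F(\ell-1)|/2$ and thus $O(\log n)$ iterations. The two-port coupling is absorbed cleanly by this $2$-vs-$4$ token balance; no forest argument is needed.

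With your fixed dyadic intervals, this halving does not follow: a dyadic interval of time-length $2^{k+2}$ carries no lower bound on the $b$-mass it contains after earlier rounds have zeroed out variables, so a tight interval constraint may collect arbitrarily few tokens, and the inequality $|P|\le |F|/2$ breaks. Consequently you have no argument that $O(\log n)$ levels suffice to reach integrality, and your per-level overload bound (``$O(c_p)$ flows per port per level-$k$ interval remain split'') is asserted rather than proved. Separately, your description of the LP hierarchy is internally inconsistent: if level $k$ imposes constraints on \emph{all} sizes $4,8,\ldots,2^{k+2}$, then higher $k$ means more constraints, not a relaxation. The fix is to drop the dyadic scaffold and re-partition intervals by solution value at every iteration, which then lets the token argument above go through.
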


\begin{remark}
    \emph{We can regard a solution satisfying the properties in the lemma as a sequence of bipartite graphs $\{G_t\}_t$. Then, for any given (time) interval $[a,b]$, the degree of any vertex $p$ in the ``combined'' graph $\cup_{t\in [a,b]} G_t$ is at most $(b-a)\capacity{p} + O(\capacity{p}\log n)$. 
    In Section \ref{sec:validschedule}, we convert this sequence to a sequence of matchings.
    }
\label{rem:bipartitedegree}
\end{remark}

\ignore{\emph{Consider a solution $b^{*}$ satisfying the three properties in the lemma. This solution can be regarded as a sequence of bipartite graphs $\{G_t\}_t$ such that for any given (time) interval $[a,b]$, the degree of any vertex $p$ in the ``combined'' graph $\cup_{t\in [a,b]} G_t$ is at most $\capacity{p}$ times the length of the interval, plus a factor of $\capacity{p}\log n$. In the next section, we will examine how to convert this sequence to a sequence of matchings.}}
    
        \subsubsection*{Iterative rounding}
        \label{sec:avgiterativerounding}
        
To establish Lemma \ref{lem:lpsolution}, we iteratively relax variable assignments with a sequence of linear programs which we denote by $LP(\ell)$ for $\ell = 0, 1, ...$. Recall that $LP(0)$ is the initial linear program above.  
We denote the set of flows that appear in $LP(\ell)$ by $F(\ell)$ and an optimal solution to $LP(\ell)$ by $b^{\ell} = \{b^{\ell}_{et}\}_{e,t}$.
Let $E(\ell)$ be the set of variables in $LP(\ell)$ with non-zero assignments. Let $A(\ell)$ be the set of flows $e$ such that, for all $t$, $b_{et}^{\ell}$ is integral. Let $P(\ell)$ be the set of tight capacity constraints (\ref{lp:avgl_capacity}) in $LP(\ell)$ given $b^{\ell}$.
Let $\size{\max} = \max_e\{\size{e}\}$.
See Figure~\ref{fig:iterativeroundingoverview} for a high level overview.

\ifsigconf
\begin{figure}
    \centering
    \includegraphics[width=\columnwidth]{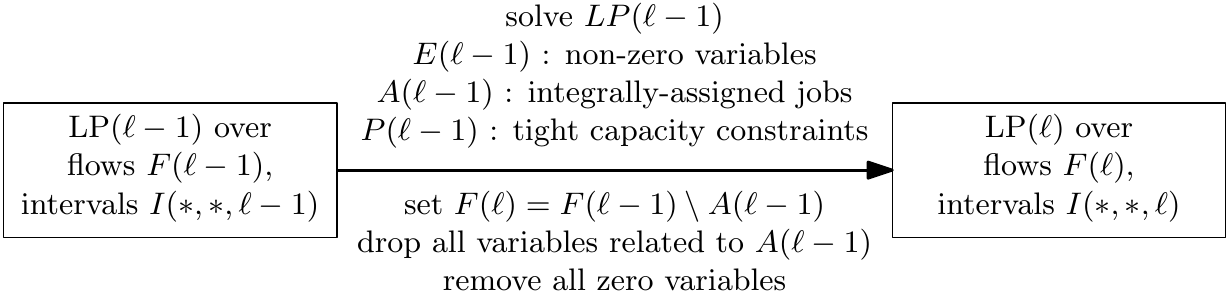}
    \caption{The $\ell$-th iteration of the rounding scheme, $\ell \geq 1$, starts by solving $LP(\ell-1)$ and ends by defining $LP(\ell)$.}
    \label{fig:iterativeroundingoverview}
\end{figure}
\else
\begin{figure}
    \centering
    \includegraphics[]{Figures/flow_redrawnfig.pdf}
    \caption{The $\ell$-th iteration of the rounding scheme, $\ell \geq 1$, starts by solving $LP(\ell-1)$ and ends by defining $LP(\ell)$.}
    \label{fig:iterativeroundingoverview}
\end{figure}
\fi

\ignore{
\begin{figure}
\begin{center}
\begin{tikzpicture}[scale=0.85]
\tikzstyle{every node}=[font=\small]

\node at (0,0) (A) [fill=blue!25] {\begin{tabular}{c} $LP(\ell-1)$ over \\ flows $F(\ell-1)$, \\ intervals $I(*,*,\ell-1)$ \end{tabular}};

\node at (12,0) (B) [fill=blue!25] {\begin{tabular}{c} $LP(\ell)$ over \\ flows $F(\ell)$, \\ intervals $I(*,*,\ell)$ \end{tabular}};

\draw[-{Latex[width=3mm, red]}, thick] (A) edge node[anchor=south] {\begin{tabular}{c} solve $LP(\ell-1)$ \\ $E(\ell-1) :$ non-zero variables \\ $A(\ell) :$ integrally-assigned jobs \\ $P(\ell-1) :$ tight port constraints (\ref{eq:port_const})
\end{tabular}} (B);

\draw[-{Latex[width=3mm, red]}, thick] (A) edge node[anchor=north] {\begin{tabular}{c} set $F(\ell) = F(\ell-1) \backslash A(\ell)$ \\ drop all variables related to $A(\ell)$ \\ remove all zero variables \end{tabular}} (B);

\end{tikzpicture}
\end{center}
\caption{The $\ell$-th iteration of the rounding scheme, $\ell \geq 1$, starts by solving $LP(\ell-1)$ and ends by defining $LP(\ell)$ to be solved in the next iteration.}\label{fig:iteration}
\end{figure}
}

In each iteration $\ell \geq 1$, we construct $LP(\ell)$ as follows. 
\begin{mylowitemize}
\item Initialize $F(\ell)=F(\ell-1)$.
\item Find an optimal solution $\{b^{\ell-1}_{et}\}_{e,t}$ to $LP(\ell-1)$.
\item Eliminate zero variables. In other words the variables $b_{et}$ in $LP(\ell)$ are only defined for variables in $E(\ell-1)$, the support of $b^{\ell-1}$.
\item Fix integral assignments. For all $e \in A(\ell-1)$,  assign $e$ to those rounds $t$ such that $b_{et}^{\ell-1} > 0$ (i.e. set $b^*_{et} \leftarrow \size{e}$) and drop all variables $b_{et}$ in $LP(\ell)$. We also update $F(\ell) = F(\ell) \backslash \{e\}$.  
\ignore{
Let $A(\ell)$ denote the set of all flows which get integrally assigned in the $\ell$-the iteration; that is, the ones which get integrally assigned by solving $LP(\ell-1)$. We define $A(0) = \emptyset$.}
\item Define intervals for the current iteration as follows. Fix a port $p$ and consider the flows in $F(\ell) \cup \flows{p}$. Sort all the variables in $\{b^{\ell-1}_{et} \in E(\ell-l) : e \in F(\ell) \cup F_p \}$ in increasing order of $t$, breaking ties lexicographically. Next, iteratively partition $b^{\ell-1}_{et}$ variables into groups $I(p,1,\ell), I(p,2,\ell),\ldots$ as follows. To construct group $I(p,a,\ell)$, start from the earliest non-grouped variable and greedily group consecutive $b^{\ell-1}_{et}$ variables until their sum first exceeds $4\capacity{p}$. 
The size of the interval $I = I(p,a,\ell)$ is 
\begin{equation*} 
\textrm{Size}(I) = \sum_{b_{et}\in I} b^{\ell-1}_{et}.
\end{equation*}
Note that $\textrm{Size}(I(p,a,\ell)) \in [4\capacity{p}, 5\capacity{p})$. The time duration of $I$ can be much larger than its size.
On the other hand, for $\ell=0$, all intervals are of size 4 as evident in the initial LP.
\end{mylowitemize}

For $\ell \ge 1$, $LP(\ell)$ is given by objective (\ref{lp:avg0_start}) subject to constraints (\ref{eq:request_const}),  (\ref{eq:lp_gk_f}), and the following constraint.
\begin{align}
\sum_{e \in \flows{p} \cap F(\ell)} \, &\sum_{b_{et} \in I(p,a,\ell)} b_{et} \leq \text{Size}(I(p,a,\ell)) \cdot \capacity{p} &\forall p, a
\label{lp:avgl_capacity}
\end{align}

Since $LP(\ell)$ is a relaxation of $LP(\ell-1)$, the second requirement of Lemma~\ref{lem:lpsolution} is satisfied. Also, by construction of $LP(\ell)$, the sequence of iterations results in an integral assignment of all flows and so the first requirement of Lemma~\ref{lem:lpsolution} is satisfied. 
It remains to bound the number of iterations and calculate the backlog.

Recall that $F(\ell)$ is the set of flows $e$ such that variables $b_{et}$ appear in $LP(\ell)$. 
Note that, for $\ell > 0$, these are the non-zero variables which correspond to non-integrally-assigned jobs after solving $LP(\ell-1)$.

\begin{lemma}
For all $\ell \geq 1$, $|F(\ell)| \leq |F(\ell-1)| / 2$.
\label{lem:num_iterations}
\end{lemma}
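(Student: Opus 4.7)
The plan is to invoke the standard iterative-rounding vertex-counting argument (as in Bansal and Kulkarni~\cite{Bansal2015flowtime}). Replace $b^{\ell-1}$ with an optimal basic feasible solution of $LP(\ell-1)$, which exists since the optimum is finite, and let $|E(\ell-1)|$ denote the number of strictly positive coordinates of this BFS. The proof then proceeds by sandwiching $|E(\ell-1)|$ between two bounds and using the greedy interval construction to control the slack.

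First I would lower bound $|E(\ell-1)|$. A flow $e \in F(\ell-1)$ is placed in $A(\ell)$ exactly when it has a single non-zero $b^{\ell-1}_{et}$; every flow in $F(\ell) = F(\ell-1) \setminus A(\ell)$ therefore contributes at least two non-zero variables. Hence
\begin{equation*}
|E(\ell-1)| \;\geq\; |A(\ell)| + 2|F(\ell)| \;=\; |F(\ell-1)| + |F(\ell)|.
\end{equation*}
Second, I would use the BFS property to upper bound $|E(\ell-1)|$ by the number of tight non-negativity-independent constraints of $LP(\ell-1)$. These are the $|F(\ell-1)|$ demand constraints (\ref{eq:request_const}) together with the interval capacity constraints (\ref{lp:avgl_capacity}), one per pair $(p,a)$. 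Writing $N_{\ell-1}$ for the total number of intervals $I(p,a,\ell-1)$,
\begin{equation*}
|E(\ell-1)| \;\leq\; |F(\ell-1)| + N_{\ell-1}.
\end{equation*}
Combining the two displayed inequalities yields the key estimate $|F(\ell)| \leq N_{\ell-1}$.

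The heart of the argument is to bound $N_{\ell-1} \leq |F(\ell-1)|/2$. Fix a port $p$. By the greedy construction, every interval $I(p,a,\ell-1)$ has Size at least $4 c_p$, except possibly a short trailing interval at $p$ which I would absorb into its predecessor (merging only relaxes $LP(\ell-1)$ and therefore only strengthens the BFS count). On the other hand, the Sizes of the intervals at $p$ sum to $\sum_{e \in F(\ell-1) \cap F_p} \sum_t b^{\ell-2}_{et} = \sum_{e \in F(\ell-1) \cap F_p} d_e$ (the second equality using that the demand constraint is tight at optimum), which is at most $|F(\ell-1) \cap F_p| \cdot c_p$ because $d_e \leq \kappa_e \leq c_p$. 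Consequently the number of intervals at $p$ is at most $|F(\ell-1) \cap F_p|/4$. Summing over the $2m$ ports and using the fact that each flow is incident on exactly two of them,
\begin{equation*}
N_{\ell-1} \;\leq\; \sum_p \frac{|F(\ell-1) \cap F_p|}{4} \;=\; \frac{2|F(\ell-1)|}{4} \;=\; \frac{|F(\ell-1)|}{2}.
\end{equation*}

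The main technical nuisance I anticipate is handling the possibly sub-threshold trailing interval at each port cleanly enough to get the tight constant $1/2$ rather than $|F(\ell-1)|/2 + O(m)$; the merge trick above is one fix, and a slightly different option is to terminate the iteration once $|F(\ell-1)| = O(m)$ and finish directly, paying only an additional $O(\log m)$ iterations. Either way, the two sandwiching bounds on $|E(\ell-1)|$ then yield the claimed halving.
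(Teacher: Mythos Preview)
Your sandwiching of $|E(\ell-1)|$ is exactly the paper's: the lower bound $|F(\ell-1)|+|F(\ell)|$ from ``each non-integral flow contributes at least two positive variables,'' and the upper bound $|F(\ell-1)|+(\text{capacity count})$ from the BFS rank condition. The difference is in how the capacity count is controlled. The paper bounds only the number $|P(\ell-1)|$ of \emph{tight} capacity constraints, and does so via a token argument run on the \emph{current} solution $b^{\ell-1}$: each flow distributes two tokens in pieces $b^{\ell-1}_{et}/d_e$ to the intervals containing its variables, and a tight interval at port $p$ collects at least $4c_p/c_p=4$ tokens, giving $|P(\ell-1)|\le |F(\ell-1)|/2$. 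You instead bound the \emph{total} number of intervals $N_{\ell-1}$, using the \emph{previous} solution $b^{\ell-2}$ through the Size definition.

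That distinction creates a genuine gap at the base case $\ell=1$. The intervals of $LP(0)$ are the fixed length-$4$ time windows $(4(a-1),4a]$; there is no $b^{-1}$ to feed your mass-over-threshold count, and the raw window count $N_0$ across all ports is governed by the time horizon, not by $|F(0)|$, so $N_0\le |F(0)|/2$ simply need not hold. The paper's version sidesteps this because it only counts \emph{tight} windows of $LP(0)$, and for each of those $\sum b^0_{et}=4c_p$ directly, so the token bound applies without reference to any earlier iterate. The cleanest fix to your argument is to replace $N_{\ell-1}$ by $|P(\ell-1)|$ and run the very same mass-versus-$4c_p$ count with $b^{\ell-1}$ in place of $b^{\ell-2}$ (a tight interval contributes exactly its right-hand side to the port-$p$ mass, and the total port-$p$ mass is still $\sum_{e\in F(\ell-1)\cap F_p} d_e\le |F(\ell-1)\cap F_p|\,c_p$); then $\ell=1$ is no longer special and your merge trick for trailing intervals is no longer load-bearing for the lemma as stated.
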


\begin{proof}
Consider a linearly independent set of tight constraints in $LP(\ell-1)$. Since a tight non-negativity constraint (\ref{eq:lp_gk_f}) results in a zero variable, the number of non-zero variables, $E(\ell-1) \subseteq \{b^{\ell-1}_{f,t}\}_{f,t}$, is at most the number of tight flow constraints (\ref{eq:request_const}) plus the number of tight capacity constraints (\ref{lp:avgl_capacity}). That is
\begin{equation}
|E(\ell-1)| \leq |F(\ell-1)| + |P(\ell-1)|. \label{eq:supp1}
\end{equation}
since $|F(\ell-1)|$ is the number of flow constraints.

Now, each flow which is not integrally assigned by $b^{\ell-1}$ (i.e. not in $A(\ell-1)$) contributes at least two to $|E(\ell-1)|$. Thus,
\begin{align}
|E(\ell-1)| &\geq |A(\ell-1)| + 2(|F(\ell-1)| - |A(\ell-1)|) \notag
\\
&= |F(\ell-1)| + |F(\ell)|. \label{eq:supp2}
\end{align}
The equality holds since $F(\ell) = F(\ell-1) \backslash A(\ell)$ by construction.
Inequalities (\ref{eq:supp1}) and (\ref{eq:supp2}) together imply
\iflong
\begin{equation*}
|F(\ell)| \leq |P(\ell-1)|.
\end{equation*}
\else
$|F(\ell)| \leq |P(\ell-1)|$.
\fi

Next, we show that $|P(\ell-1)| \leq |F(\ell-1)| / 2$ which completes the proof. This is accomplished by a simple combinatorial argument. Let's give $2$ tokens to every flow in $F(\ell-1)$. Now, each flow $e \in F(\ell-1)$, gives a portion equal to $b^{\ell-1}_{et}/\size{e}$ of its tokens to the interval that contains $b_{et}$. This token distribution is valid since
\begin{equation*}
\sum_p \sum_t b^{\ell-1}_{et} = 2\sum_t b^{\ell-1}_{et} = 2\size{e},
\end{equation*}
where we have used the fact that each flow appears in exactly two port constraints. At the same time, each tight capacity constraint for port $p$ receives at least $4$ tokens since interval sizes are $\ge 4\capacity{p}$ by definition and $\size{e} \le \capacity{p
}$ by assumption. 
Now, as each job distributes exactly 2 tokens and each tight port constraint receives at least 4, we conclude that $|P(\ell-1)| \leq |F(\ell-1)| / 2$.
\end{proof}

Lemma~\ref{lem:num_iterations} shows that the number of iterations needed before arriving at an integral solution is no more than $O(\log n)$.
What remains is to bound amount of extra load that any interval has taken on. Recall that $A(\ell)$ denotes the set of flows which are integrally assigned by the optimal solution $b^{\ell}$ to $LP(\ell)$. Let $A(\ell, p, t_1, t_2) \subseteq A(\ell)$ be the set of flows which are integrally assigned to port $p$ in the interval $[t_1, t_2]$ by the optimal solution $b^{\ell}$ of $LP(\ell)$. Furthermore, we define
\begin{equation*}
\mathrm{Vol}(p,\ell,t_1,t_2) = \sum_{e \in \flows{p} \cap F(\ell)} \sum_{t\in [t_1,t_2]} b^{\ell}_{et} + \sum_{\ell' \leq \ell} |A(\ell',p,t_1,t_2)|,
\end{equation*}
which is the total size of flows assigned to port $p$ in the interval $[t_1,t_2]$\iflong, either integrally or fractionally,\fi by $b^0,\ldots,b^{\ell}$.
The following lemma states that the amount of extra load taken on any port in any interval is no more than a constant additive over the load in the previous iteration.

\begin{lemma}\label{lem:backlog}
For any $[t_1,t_2]$, any port $p$, and any round $\ell \geq 1$,
\begin{equation} \label{eq:vollem}
\mathrm{Vol}(p,\ell,t_1,t_2) \leq \mathrm{Vol}(p,\ell-1,t_1,t_2) + 10\capacity{p}.
\end{equation}
\end{lemma}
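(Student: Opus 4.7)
The plan is to squeeze the inequality out of the per-interval capacity constraint (\ref{lp:avgl_capacity}) of $LP(\ell)$, which forces the new fractional mass in each interval $I(p,a,\ell)$ to be bounded by the old mass $\mathrm{Size}(I(p,a,\ell))$. First I would fix the port $p$ and window $[t_1,t_2]$, and identify the minimal consecutive range of intervals $I(p,a_1,\ell),\ldots,I(p,a_2,\ell)$ for that port that contain every variable $b^{\ell-1}_{et}$ with $e\in F_p\cap F(\ell)$ and $t\in[t_1,t_2]$: concretely, $a_1$ is the smallest index whose interval holds some variable with $t\ge t_1$, and $a_2$ is the largest with some variable at $t\le t_2$. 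Because the variables are sorted by $t$ before being grouped greedily, every variable at $t\in[t_1,t_2]$ belongs to some $I(p,a,\ell)$ with $a\in[a_1,a_2]$.

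Summing (\ref{lp:avgl_capacity}) over $a_1,\ldots,a_2$ gives
\begin{equation*}
\sum_{e\in F_p\cap F(\ell)}\;\sum_{t\in[t_1,t_2]} b^{\ell}_{et}\;\le\;\sum_{a=a_1}^{a_2}\mathrm{Size}(I(p,a,\ell)).
\end{equation*}
The right-hand side is by definition the $b^{\ell-1}$-mass of flows in $F_p\cap F(\ell)$ carried by all variables in $\bigcup_{a=a_1}^{a_2} I(p,a,\ell)$. I would then decompose this mass into the part lying in $[t_1,t_2]$ and the "spillover" outside it. The key observation is that an interior interval $I(p,a,\ell)$ with $a_1<a<a_2$ is sandwiched in time strictly between the extremes of $I(p,a_1,\ell)$ and $I(p,a_2,\ell)$, so all of its variables sit inside $[t_1,t_2]$; hence all spillover is confined to the two boundary intervals $I(p,a_1,\ell)$ and $I(p,a_2,\ell)$, each of Size $<5\capacity{p}$. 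Therefore
\begin{equation*}
\sum_{a=a_1}^{a_2}\mathrm{Size}(I(p,a,\ell))\;\le\;\sum_{e\in F_p\cap F(\ell)}\;\sum_{t\in[t_1,t_2]} b^{\ell-1}_{et}+10\,\capacity{p},
\end{equation*}
which after combining with the previous display yields the core inequality relating the new and old fractional mass on port $p$ over $[t_1,t_2]$ for still-fractional flows.

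Finally I would translate this into the $\mathrm{Vol}$ language. Using the disjoint decomposition $F(\ell-1)=F(\ell)\sqcup A(\ell-1)$ and the fact that for a (unit) flow $e\in A(\ell-1)$ the value $\sum_{t\in[t_1,t_2]} b^{\ell-1}_{et}$ equals $\mathbf{1}[e\in A(\ell-1,p,t_1,t_2)]$, the extra mass accounted for by removing $A(\ell-1)$ from $F(\ell-1)$ is absorbed exactly by the term $|A(\ell-1,p,t_1,t_2)|$ in $\mathrm{Vol}(p,\ell-1,t_1,t_2)$, and a symmetric bookkeeping absorbs $|A(\ell,p,t_1,t_2)|$ into the first sum at level $\ell$. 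After this cancellation, (\ref{eq:vollem}) drops out of the core inequality above.

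The main obstacle is the boundary/spillover argument of Step 2: one must pin down that the partition into intervals genuinely respects the time order (handling the lexicographic tie-breaking among variables that share a round), so that only the two endpoint intervals protrude outside $[t_1,t_2]$. Once that is clean, the rest is mass bookkeeping.
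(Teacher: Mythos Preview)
Your proposal is correct and follows essentially the same approach as the paper: cover $[t_1,t_2]$ by the minimal run of consecutive $I(p,a,\ell)$ intervals, use constraint~(\ref{lp:avgl_capacity}) on each interval to bound the level-$\ell$ mass by $\mathrm{Size}$, observe that only the two boundary intervals can spill outside $[t_1,t_2]$ contributing at most $2\cdot 5\capacity{p}=10\capacity{p}$, and then do the $F(\ell-1)=F(\ell)\sqcup A(\ell-1)$ bookkeeping to pass to $\mathrm{Vol}$. The paper's proof is structured identically, down to the $10\capacity{p}$ coming from the two endpoint intervals.
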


\begin{proof}
Fix an interval $[t_1,t_2]$ and a port $p$. In each iteration $\ell$, the ``extra'' load in this interval can be introduced only if two intervals overlap with the boundaries of $[t_1,t_2]$.

Consider a maximal set of contiguous intervals $I(\ell,a,p)$, $I(\ell,a+1,p)$, ..., $I(\ell,a+w,p)$ that contain $[t_1,t_2]$. Note that $a$ is the smallest index such that $I(\ell,a,p)$ contains some $b^{\ell}_{et}$ with $t\in[t_1,t_2]$. Similarly, $w$ is the largest number such that $I(\ell,a+w,p)$ contains some $b^{\ell}_{et}$ with $t\in[t_1,t_2]$. Since each interval is of size smaller than $5\capacity{p}$, 
\begin{equation}
\sum_{b_{et}\in I(\ell,a,p)} b^{\ell}_{et} + \sum_{b_{et}\in I(\ell,a+w,p)} b^{\ell}_{et} < 10\capacity{p}.
\end{equation}
Moreover,
\begin{align*}
    \sum_{x=a+1}^{a+w-1} \sum_{b_{et}\in I(\ell,x,p)} b^{\ell}_{et} &\leq \sum_{x=a+1}^{a+w-1} \mathrm{Size}(I(\ell,x,p))\\
    = \sum_{x=a+1}^{a+w-1} \sum_{b_{et}\in I(\ell,x,p)} b^{\ell-1}_{et}
    &\leq \sum_{e \in \flows{p} \cap F(\ell)} \sum_{t\in [t_1,t_2]} b^{\ell-1}_{et},
\end{align*}
where the first inequality follows from the port capacity constraints (\ref{eq:port_const}), and the second equality follows from the definition of $\textrm{Size}(*)$. Consequently, we have that
\begin{align*}
\sum_{e \in \flows{p} \cap F(\ell)} \sum_{t\in [t_1,t_2]} b^{\ell}_{et} &\leq \sum_{x=a}^{a+w} \ \sum_{b_{et}\in I(\ell,x,p)} b^{\ell}_{et} \\
& < 10\capacity{p} + \sum_{e \in \flows{p} \cap F(\ell)} \ \sum_{t\in [t_1,t_2]} b^{\ell-1}_{ft}
\end{align*}
\[ \hspace{1.5cm}\leq 10\capacity{p} - |A(\ell-1, t_1, t_2, p)| + \sum_{e \in \flows{p} \cap F(\ell-1)} \ \sum_{t\in [t_1,t_2]} b^{\ell-1}_{et}, \]
where the last step uses the fact that $F(\ell) = F(\ell-1) \backslash A(\ell)$.

The LHS of (\ref{eq:vollem}) equals
\begin{align*}
\sum_{f \in F_p \cap F(\ell)} \sum_{t\in [t_1,t_2]} b^{\ell}_{ft} + &|A(\ell-1, t_1, t_2, p)| \leq\\
&\sum_{f \in F_p \cap F(\ell-1)} \sum_{t\in [t_1,t_2]} b^{\ell-1}_{ft} + 10\capacity{p},
\end{align*}
which equals the RHS of (\ref{eq:vollem}). 
\end{proof}
We now establish a bound on the total ``extra'' load in any interval for the final assignment. Recall that $b^*$ is the final, integral assignment derived from the iterative procedure above. 

\begin{lemma} \label{lem:error_bound}
For any interval $[t_1,t_2]$ and port $p$, 
\[\sum_{e \in \flows{p}}\ \sum_{t \in [t_1,t_2]} b^{*}_{et}\ \leq\ \capacity{p}(t_2-t_1) + 10\capacity{p}\log n.\]
\end{lemma}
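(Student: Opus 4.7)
The plan is to chain together Lemmas~\ref{lem:num_iterations} and~\ref{lem:backlog} starting from the initial LP, and to separately handle the fact that $LP(0)$ only controls load in windows of length $4$ rather than in an arbitrary interval $[t_1,t_2]$.

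First, I would establish the base case, namely that
\[\mathrm{Vol}(p,0,t_1,t_2)\ \le\ \capacity{p}(t_2-t_1) + O(\capacity{p}).\]
Since $LP(0)$ has no integral assignments yet, $\mathrm{Vol}(p,0,t_1,t_2) = \sum_{e\in F_p\cap F(0)}\sum_{t\in[t_1,t_2]} b^0_{et}$. The interval $[t_1,t_2]$ can be covered by at most $\lceil (t_2-t_1)/4\rceil + 1$ of the four-round windows used in constraint~(\ref{lp:avg0_capacity}), each of which contributes at most $4\capacity{p}$ to the sum. Hence the base case holds with an additive constant of $O(\capacity{p})$.

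Next, I would apply Lemma~\ref{lem:backlog} inductively: for each $\ell\ge 1$,
\[\mathrm{Vol}(p,\ell,t_1,t_2)\ \le\ \mathrm{Vol}(p,\ell-1,t_1,t_2) + 10\capacity{p}.\]
Let $L$ be the number of iterations until every flow has been integrally assigned. By Lemma~\ref{lem:num_iterations}, $|F(\ell)|\le |F(\ell-1)|/2$, so the process terminates after $L = O(\log n)$ iterations, at which point $F(L)=\emptyset$ and the accumulated integral assignments from $\bigcup_{\ell\le L} A(\ell)$ coincide with the final assignment $b^*$. Therefore
\[\sum_{e\in F_p}\sum_{t\in[t_1,t_2]} b^*_{et}\ =\ \mathrm{Vol}(p,L,t_1,t_2).\]

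Finally, unrolling the recurrence over $L = O(\log n)$ steps and plugging in the base case gives
\[\mathrm{Vol}(p,L,t_1,t_2)\ \le\ \capacity{p}(t_2-t_1) + O(\capacity{p}) + 10\capacity{p}\cdot L\ \le\ \capacity{p}(t_2-t_1) + 10\capacity{p}\log n,\]
after absorbing the $O(\capacity{p})$ base-case overhead into the $10\capacity{p}\log n$ term (with a mild adjustment of the constant $10$). The main technical worry is a bookkeeping one: making sure that the quantity $\mathrm{Vol}(p,\ell,t_1,t_2)$, which sums fractional mass from $F(\ell)$ together with already-committed integral assignments from earlier iterations, really equals $\sum_{e\in F_p}\sum_{t\in[t_1,t_2]} b^*_{et}$ at $\ell=L$. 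This follows because $F(L)=\emptyset$ eliminates the fractional term, and the union $\bigcup_{\ell'\le L} A(\ell')$ is exactly the set of flows fixed by $b^*$, with each contributing its demand to the proper round.
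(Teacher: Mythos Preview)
Your proposal is correct and follows essentially the same route as the paper: establish an $O(\capacity{p})$ base case from the length-$4$ window constraints of $LP(0)$, then apply Lemma~\ref{lem:backlog} inductively for $O(\log n)$ iterations (via Lemma~\ref{lem:num_iterations}) to accumulate the $10\capacity{p}\log n$ overhead. If anything, your version is slightly more explicit than the paper's about the window-covering argument in the base case and about why $\mathrm{Vol}(p,L,t_1,t_2)$ coincides with the final integral load once $F(L)=\emptyset$.
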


\begin{proof}
    We fix the interval $[t_1,t_2]$ and port $p$. By construction of $b^*$, we need only to show that, for all $\ell$ 
    \begin{equation}
        \mathrm{Vol}(p,\ell,t_1,t_2) \le \capacity{p}(t_1-t_2) + 10(\ell+1) \capacity{p}.
        \label{eq:overload}
    \end{equation}
    We prove inequality~\ref{eq:overload} by induction on $\ell$.
    For $\ell = 0$, we have 
    \begin{align*}
        \mathrm{Vol}(p,0,t_1,t_2) = \sum_e \sum_{t \in [t_1,t_2]} b^0_{et} \le \capacity{p}(t_1 - t_2) + 4 \capacity{p}
    \end{align*}
    by Constraint~(\ref{lp:avg0_capacity}).
    So
    \begin{align*}
        \mathrm{Vol}(p,\ell+1,t_1,t_2) &\le \mathrm{Vol}(p,\ell,t_1,t_2) + 10\capacity{p} \\
        &\le \capacity{p}(t_1 - t_2) + 10(\ell + 2)\capacity{p}
    \end{align*}
\iflong
where the first inequality follows by Lemma~\ref{lem:backlog} and the second by induction.
\else
by Lemma~\ref{lem:backlog} and induction.
\fi 
\end{proof}

We now have all the necessary ingredients to prove Lemma \ref{lem:lpsolution}.
\begin{proof}[Proof of Lemma \ref{lem:lpsolution}]
In the final solution $\{b^{*}_{et}\}_{e,t}$, all flows are integrally assigned. Furthermore, the cost of the final solution is at most that of an optimal solution to the initial linear program (since each iteration, we are relaxing the previous linear program). Finally, by Lemma \ref{lem:error_bound}, for any time interval $[t_1,t_2]$ and port $p$, the total volume of assigned flows is at most $\capacity{p}(t_2-t_1) + O(\capacity{p}\log n)$. 
\end{proof}
        
    \subsection{Getting a valid schedule}
    \label{sec:validschedule}
    What we obtain from Lemma \ref{lem:lpsolution} is, unfortunately, not a valid schedule but what could be called a \emph{pseudo-schedule}; as noted in Remark~\ref{rem:bipartitedegree}, the total amount of flow passing through a port $p$ during a time interval $I$ could as much as $c_p O(\log n)$ more than $c_p |I|$, as allowed by the capacity of the port.  In this section we show that we can convert the pseudo-schedule given by Lemma \ref{lem:lpsolution} into a valid schedule using {\em resource augmentation}, i.e., assuming the algorithm is allowed more port capacity than the optimal schedule.  It is immediate from Lemma~\ref{lem:lpsolution} that if we augment the capacity of every port by a factor of $1 + O(\log n)$, then we obtain a valid resource-augmented schedule with optimal average response time.  In the following, we show that we can achieve logarithmic-approximate average response time with a small constant blowup in port capacity, for the case of unit demand flows (and arbitrary port capacities).

\begin{theorem}\label{thm:speed_aug}
For any positive integer $c$, there exists a polynomial-time algorithm that, given a set of $n$ unit flows over a switch, computes a $(1 + \frac{O(\log n)}{c})$-approximation for average response time unit-size flows, while incurring a blowup in capacity by a factor of $(1 + c)$.
\end{theorem}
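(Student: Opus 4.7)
The plan is to convert the pseudo-schedule produced by Lemma~\ref{lem:lpsolution} into a valid $(1+c)$-capacity-augmented schedule via a block decomposition combined with K\"onig-style bipartite edge coloring. Apply Lemma~\ref{lem:lpsolution} with unit demands to obtain, for every flow $e$, a unique nominal round $t_e^* \geq r_e$ (so $b^*_{e,t_e^*} = 1$). This gives integral pseudo-response times $\rho_e^* = t_e^* + 1 - r_e$ with $\sum_e(\rho_e^* - \tfrac12) \leq \text{OPT}$, and the interval-overload property guarantees that for any port $p$ and any interval $[t_1, t_2]$, $|\{e : p \in e,\ t_e^* \in [t_1,t_2]\}| \leq c_p(t_2 - t_1) + K c_p \log n$ for an absolute constant $K$.

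Set the block length $L = \lceil K \log n / c \rceil$ and partition time into consecutive blocks $B_k = [kL, (k+1)L)$. For each $k$ let $\mathcal{E}_k = \{e : t_e^* \in B_k\}$ and let $G_k$ be the bipartite multigraph on the port set whose edge set is $\mathcal{E}_k$. The interval bound forces $\deg_{G_k}(p) \leq c_p L + K c_p \log n \leq (1+c) c_p L$ for every port $p$. I would then split each port $p$ into $(1+c) c_p$ unit-capacity copies and distribute the edges of $G_k$ incident to $p$ among these copies so that every copy has degree at most $L$; K\"onig's theorem $L$-edge-colors the resulting simple bipartite multigraph, and collapsing the copies yields a decomposition of $G_k$ into $L$ b-matchings $H_1, \ldots, H_L$ with $\deg_{H_j}(p) \leq (1+c) c_p$ at every port $p$. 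Schedule $H_j$ in round $(k+1)L + j - 1$, i.e.\ push the entire block $B_k$ one block forward into $B_{k+1}$. Because $r_e \leq t_e^* < (k+1)L$ for every $e \in \mathcal{E}_k$, all release constraints are automatically satisfied.

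For the response-time analysis, the actual round $s_e$ at which flow $e \in \mathcal{E}_k$ is scheduled lies in $B_{k+1}$, so $s_e - t_e^* \leq 2L - 1$ and hence $\rho_e^{\text{final}} \leq \rho_e^* + 2L - 1$. Summing and applying $\sum_e \rho_e^* \leq \text{OPT} + n$ together with the trivial lower bound $\text{OPT} \geq n$ yields $\sum_e \rho_e^{\text{final}} \leq \text{OPT}\,(1 + O(\log n / c))$, which gives the claimed approximation ratio on the average response time.

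The hardest conceptual step is reconciling the interval-based overload tolerated by Lemma~\ref{lem:lpsolution} (up to $K c_p \log n$ excess in a single round) with the strict per-round capacity constraint of a real schedule. The block length $L = \Theta(\log n / c)$ is chosen precisely so that the $K c_p \log n$ overload per port is absorbed by the $c \cdot c_p L$ spare units per block created by the $(1+c)$-augmentation; the K\"onig decomposition then spreads this load uniformly across the $L$ rounds of the block, while the one-block shift handles release-time feasibility for free without having to carry out a delicate Hall-type argument inside each block.
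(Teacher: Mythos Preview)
Your proposal is correct and follows essentially the same approach as the paper: divide time into blocks of length $\Theta(\log n/c)$, use the interval-overload bound from Lemma~\ref{lem:lpsolution} to cap per-block port degrees at $(1+c)c_pL$, edge-color (via K\"onig / Birkhoff--von~Neumann after port replication) into $L$ $b$-matchings, and charge the $O(L)$ per-flow delay against the trivial lower bound $\text{OPT}\geq n$. The only cosmetic differences are that the paper first treats unit capacities and then extends via the same port-splitting trick you use directly, and that the paper phrases the scheduling as ``next available spots'' whereas your explicit one-block shift is a bit cleaner for verifying release-time feasibility.
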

\begin{proof}
Given a set $F$ of flows over a switch, by Lemma \ref{lem:lpsolution}, there exists a pseudo-schedule which assigns flows to time slots such that the total response time is at most the cost of an optimal solution to the initial linear program and for any given time interval $[t_1,t_2]$, and for any port $p$, the total volume of flows assigned to $p$ during the interval is at most $c_p(t_2-t_1) + O(c_p\log n)$.

\iflong
We first prove the desired claim for unit capacities.
\else
Due to space constraints, we give the proof of the desired claim for unit capacities.  We extend the claim to arbitrary capacities in the full paper~\cite{jahanjou+rs:flowFull}.\fi
The  pseudo-schedule can be regarded as a sequence $\{G_t\}_t$ of bipartite $m\times m$ graphs such that in any given interval $[t_1,t_2]$, the degree of each vertex in the combined graph $\cup_{t_1}^{t_2} G_t$ is at most $(t_2-t_1)+c'\log n$ for some $c'>0$. Next, we convert this sequence $\{G_t\}_t$ into a sequence of bipartite matchings $\{M_t\}_t$. To this end, we divide the timeline into consecutive intervals $I_1, I_2, ...$, each of size $h=\lceil\frac{c\log n}{c}
\rceil$. Now, starting from the beginning, we schedule flows in each interval before going to the next one. Consider an interval $I_j$, the degree of each vertex in the combined graph $G_{I_j}$ is at most $d=\lceil c'(1+\frac{1}{c})\log n\rceil$. Applying the Birkhoff-von Neumann Theorem \cite{birkhoff}, $G_{I_j}$ can be decomposed into at most $d$ matchings in polynomial time. By increasing the capacity (bandwidth) of each port to $1+c$, we can execute $d$ matchings in the next available spots (with respect to release times) in at most $h$ time steps. Since each flow is delayed by at most $h+d=\frac{O(\log n)}{c}$ steps, the total response time of this schedule is no more than
\begin{equation}
OPT\ +\ n\times\frac{O(\log n)}{c}\ \leq\ OPT\times(1 + \frac{O(\log n)}{c}),
\end{equation}
where the inequality follows from the fact that the number of flows is lower bound on the total response time.
\iflong
We now show that the above algorithm and argument can be extended to general capacities, using the notion of {\em $b$-matchings}\footnote{A $b$-matching of a bipartite graph, for a given function $b$ from the graph's vertex set to nonnegative integers, is a subgraph in which the degree of each vertex $v$ is at most $b(v)$ (e.g., see~\cite{gerards:matching}).} and a standard transformation between $b$-matchings and matchings~\cite{gerards:matching}.  In the general case, the 
pseudo-schedule can be regarded as a sequence $\{G_t\}_t$ of bipartite graphs such that in any given interval $[t_1,t_2]$, the degree of port $p$ in the combined graph $\cup_{t_1}^{t_2} G_t$ is at most $c_p(t_2-t_1)+c_p \cdot c'\log n$ for some $c'>0$.  Similar to the unit capacity case, we convert this sequence $\{G_t\}_t$ into a sequence $\{M_t\}_t$ of bipartite $b$-matchings, where the function $b$ corresponds the port capacities. To this end, we divide the timeline into consecutive intervals $I_1, I_2, ...$, each of size $h=\lceil\frac{c'\log n}{c}
\rceil$. Now, starting from the beginning, we schedule flows in each interval before going to the next one. For each interval $I_j$, we construct a bipartite graph $B_j$ as follows.  We replicate each port $p$ $c_p$ times, and process the edges of $G_{I_j}$ in sequence: for edge $(p,q)$, we add an edge to $B_j$ between a copy of $p$ and a copy of $q$, each of which is chosen in a round-robin manner among the copies of $p$ and $q$, respectively.  This ensures that the degree of any vertex in $B_j$ is at most $d=\lceil c'(1+\frac{1}{c})\log n\rceil$. Now, applying the Birkhoff-von Neumann Theorem \cite{birkhoff}, $B_j$ can be decomposed into at most $d$ matchings in polynomial time. By increasing the capacity (bandwidth) of each port replica in each $B_j$ to $1+c$, and hence increasing the capacity of each port $p$ by a factor of $1+c$, we can execute $d$ matchings in the next available spots (with respect to release times) in at most $h$ time steps, with a $1 + O(\log n)/c$ increase in average response time. 
\fi
\end{proof}

\ignore{
\begin{remark}
Lemma \ref{lem:lpsolution} can be extended to flows of arbitrary size (with or without port capacities). Furthermore, Theorem \ref{thm:speed_aug} can also be extended to the case of arbitrary flow size and unit port capacites resulting in a $(1+\epsilon)$-speed $(1+O(P\log n)/\epsilon)$-approximation algorithm where $P$ is the maximum flow size.
\end{remark}}

\section{Maximum Response Time} 
\label{sec:maxresponsetime}
In this section, we consider the problem of Flow Scheduling to Minimize Maximum Response Time (\fsmrt). More formally, for a given instance $\instance{m,m'}$ of \fsmrt, our goal is to find the minimum $\response{}$ such that there exists a schedule of $\instance{m,m'}$ with maximum response time $\response{}$.
Section~\ref{sec:hardness} establishes that solving \fsmrt\ is \textbf{NP}-hard. Section~\ref{sec:approxmaxresponse} provides a tight approximation to \fsmrt\ via a linear programming relaxation and rounding of a more general problem. 

\ignore{
Section~\ref{sec:onlinemaxresponse} shows how to leverage this approximation to derive an approximation for maximum response time in an online setting. 

To find the minimum response time $\rho$ for a given instance $G$, we binary search over all possible values of $\rho$ and test each for feasibility. That is, for each value of $\rho$ we check if there is some solution to $G$ in which all edges are scheduled no more than $\rho$ rounds after their release. 

More formally, we define Response Time-Constrained Flow Scheduling (RTC) as follows. We are given a switch $S_{m,m'} = (V,E)$ where $V$ is a set of $n$ left ports and $m$ right ports and $E$ is a set of edges $uv$ where $u$ is a left port and $v$ a right port. Each port $v \in V$ has a corresponding capacity $c_v$. We are also given a parameter $\rho$. Each edge $e \in E$ has a corresponding release time $r_e$ and a set of active rounds $\free{e} = \{r_e, \ldots, r_e + \rho\}$. The goal is to schedule all edges $e \in E$ in some round $t_e \in \free{e}$ such that the set of edges scheduled in a single round adjacent to a given port $v$ is no more than $c_v$. We then solve a single instance of Flow Scheduling to Minimize Maximum Response Time by performing binary search to find the optimal response time parameter $\rho$. This reduces Flow Scheduling to Minimize Maximum Response Time to $O(\log n)$ instance of Response Time-Constrained Flow Scheduling.
}

    \subsection{Maximum Response Time Hardness} 
    \label{sec:hardness}
    We establish the hardness of approximation for \fsmrt\, motivating our approximations in Section~\ref{sec:approxmaxresponse}. 

\begin{theorem}
    There is no polynomial time algorithm that solves Flow Scheduling to Minimize Maximum Response Time to within a factor of $4/3$ of optimal, assuming $P \ne NP$.
    \label{thm:np_hardness}
\end{theorem}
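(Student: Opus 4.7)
The plan is to establish hardness by a polynomial-time reduction from a restricted version of the Timetable Problem of Even, Itai, and Shamir~\cite{timetable}, which is known to be \textbf{NP}-complete even when the number of periods is limited to $p=3$. The gap to exploit is between instances whose optimum maximum response time is exactly $3$ (corresponding to YES instances of Timetable) and those whose optimum is at least $4$ (corresponding to NO instances); any polynomial-time $(4/3-\epsilon)$-approximation would distinguish these cases, contradicting $\mathbf{P}\neq\mathbf{NP}$.

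Given a $3$-period Timetable instance with teachers $T_i$, classes $C_j$, unit teaching-requirement matrix $R$, and availability indicators $\alpha_i(k),\beta_j(k)\in\{0,1\}$ for $k\in\{1,2,3\}$, I would construct an FS-MRT instance as follows. Each teacher $T_i$ becomes a unit-capacity input port $p_i$ and each class $C_j$ a unit-capacity output port $q_j$. For each pair with $R_{ij}=1$, I include a unit-demand flow $e_{ij}$ between $p_i$ and $q_j$ with release time $1$, so that under a max-response-$3$ objective these flows must land in rounds $\{1,2,3\}$. To encode a single unavailability $\alpha_i(k)=0$ (and symmetrically for classes), I would attach a small gadget consisting of fresh dummy output ports and a bundle of unit-demand dummy flows incident to $p_i$ whose release times and shared dummy endpoints are chosen so that, under the max-response-$3$ constraint, the only feasible assignment of the dummy flows places one of them on $p_i$ in round $k$, thereby blocking every real flow on $p_i$ from being scheduled in that round.

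The main obstacle will be this forcing gadget: since the feasible window of a flow with release $r$ under max response $\rho=3$ is the full interval $\{r,r+1,r+2\}$, a single dummy flow cannot by itself be pinned to a specific round. I plan to build the gadget from a bundle of dummy flows sharing a common dummy port, with release times staggered so that the capacity constraints at the dummy port (a bipartite Hall/K\"onig-type condition) admit a unique matching of dummy flows to rounds in $\{1,2,3\}$, with exactly one of these edges incident to $p_i$ at round $k$. Because demands and capacities are unit and the horizon is only $3$ rounds, the combinatorial slack is small enough to make such pinning possible. Once the gadget is in place, correctness is routine: any valid $3$-period timetable assignment $f$ lifts to a max-response-$3$ schedule by placing $e_{ij}$ at round $f(i,j)$ together with the forced dummy positions, while conversely any schedule of max response $\le 3$ restricts every real flow to $\{1,2,3\}$ and respects all encoded unavailabilities, inducing a valid timetable. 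NO instances therefore force some real flow past round $3$, giving optimum max response $\ge 4$ and the claimed $4/3$ inapproximability.
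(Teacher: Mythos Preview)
Your approach is essentially the same as the paper's: both reduce from the $3$-period restricted Timetable problem of Even--Itai--Shamir, map teachers and classes to unit-capacity ports and teaching requirements to unit-demand flows, and exploit the gap between optimum maximum response time $3$ (YES instances) and $\ge 4$ (NO instances). The paper likewise uses dummy-flow gadgets to enforce unavailability of a port in a specified round, so your identification of the forcing gadget as the crux is exactly right.

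One point worth flagging: your sketch proposes to pin the dummy flows via ``a unique matching of dummy flows to rounds in $\{1,2,3\}$,'' but staying entirely inside $\{1,2,3\}$ is too tight to force a particular edge to a particular round (any three unit flows at a common dummy port, all released at time $1$, can be permuted freely over $\{1,2,3\}$). The paper's explicit gadget lets the dummy flows spill into later rounds: to block round $k$ at $p_i$, it introduces a fresh dummy output port $q^*$, a flow $p_iq^*$ released at round $k$, and three further flows incident to $q^*$ released at round $k+1$; under max response $3$ these three must occupy rounds $k{+}1,k{+}2,k{+}3$ at $q^*$, which pins $p_iq^*$ to round $k$. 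The paper also sets the release time of the real flows at $p_i$ to $\min T_i$ rather than uniformly to $1$, and adds a separate gadget at each output port to block rounds $4,5,6$ there; this combination removes any need to block round $1$ at a teacher port and keeps the case analysis short. With the gadget amended to allow rounds beyond $3$, your plan goes through and matches the paper's argument.
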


Our proof of Theorem \ref{thm:np_hardness} is via a reduction from the Restricted Time-table (RTT) problem, which is shown to be NP-hard in \cite{timetable}. 
\iflong
We redefine RTT here for completeness.
\begin{definition}[Restricted Timetable (RTT) problem]
    Given the following data:
    \begin{enumerate}[label=\roman*.]
        \item $H = \{1,2,3\}$ 
        \item a collection $\{ T_1, \ldots, T_m\}$ with $T_i \subseteq H$ and $|T_i| \ge 2$
        \item a function $g: [m] \to 2^{[m']}$ such that $|g(i)| = |T_i|$
    \end{enumerate}
    determine if there is a function $f:[m] \times [m'] \times H \to \{0,1\}$ such that
    \begin{enumerate}[label=(\roman*), resume]
        \item if $f(i,j,h) = 1$ then $j \in g(i)$
        \label{RTT:teachercanteach}
        \item $j \in g(i)$ iff $\sum_{h \in H} f(i,j,h) \ge 1$ for all $i \in [m]$ and $j \in [m']$
        \label{RTT:classistaught}
        \item $\sum_{i\in [m]} f(i,j,h) \le 1$ for all $j \in [m']$ and $h \in H$
        \label{RTT:oneteacherperclass}
        \item $\sum_{j \in [m']} f(i,j,h) \le 1$ for all $i \in [m]$ and $h \in H$
        \label{RTT:oneclassperteacher}
    \end{enumerate}
\end{definition}
\begin{proof}[Proof of Theorem \ref{thm:np_hardness}]
    We reduce RTT to the feasibility version of \fsmrt\ in which we are given a switch $\instance{m,m'}$ and a response time $\response{}$, and our goal is to check whether or not there exists a schedule with maximum response time at most $\response{}$. Let $I$ be an arbitrary instance of the RTT problem consisting of $H$ ,$\{T_1, \ldots, T_m\}$, and $g:[m] \to 2^{[m']}$ . We reduce $I$ to an instance of \fsmrt\ $\instance{m,m'} = (P,\flows{})$ and $\response{} = 3$. 
    In $\instance{m,m'}$, there are $m$ input ports $p_i$, $i \in [m]$, and $m'$ output ports $q_j$, $j \in [m']$. All ports $p$ have capacity $\capacity{p} = 1$. We construct the set $\flows{}$ according to the following steps (in order).
    \begin{enumerate}
        \item For all $i \in [m]$ and $j \in [m']$, if $j \in g(i)$ then we include an flow from input port $p_i$ to output port $q_j$. 
        \item For each input port $p_i$, we take the minimum $h \in T_i$ and release all flows adjacent to $p_i$ in round $h$.
        \item For all $j \in [m']$, we create three new input ports $w_j, y_j, z_j$. We include the flows $q_jw_j$, $q_jy_j$, and $q_jz_j$ and release these flows in round 4.
        \label{step:setmin}
        \item For all $i$ such that $T_i = \{1,3\}$, we create a new output port $q^*_i$ and three new input ports $w_i, y_i, z_i$. We include an flow $p_i q^*_i$ and release it in round 2. We also include flows $q^*_ip_i$, $q^*_iw_i$, and $q^*_iy_i$ and release them in round 3. 
        \label{step:occupy2}
        \item For all $i$ such that $T_i = \{1,2\}$, we create a new output port $q^*_i$ and three new input ports $w_i, y_i, z_i$. We include an flow $p_i q^*_i$ and release it in round 3. We also include flows $q^*_iv_i$, $q^*_iw_i$, and $q^*_iy_i$ and release them in round 4. 
        \label{step:occupy3}
    \end{enumerate}
    See Figure~\ref{fig:reduction} for a depiction of step \ref{step:occupy2}.
    
    \ifsigconf
    \begin{figure}
        \centering
        \includegraphics[scale=.8]{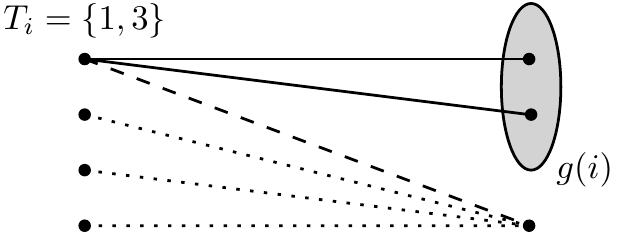}
        \caption{Flows that connect input port $p_i$ with all output ports corresponding to $g(i)$ are shown as solid lines. In this case, the solid flows are released in round 1. In order to prevent any solid flows being scheduled in round 2, we include the dashed flow to be released in round 2 and the three dotted flows to be released in round 3. In order for all three dotted flows to be scheduled by round 5, the dashed flow must be scheduled as soon as it is released.}
        \label{fig:reduction}
    \end{figure}
    \else
    \begin{figure}
        \centering
        \includegraphics[]{Figures/flow_reduction.pdf}
        \caption{Flows that connect input port $p_i$ with all output ports corresponding to $g(i)$ are shown as solid lines. In this case, the solid flows are released in round 1. In order to prevent any solid flows being scheduled in round 2, we include the dashed flow to be released in round 2 and the three dotted flows to be released in round 3. In order for all three dotted flows to be scheduled by round 5, the dashed flow must be scheduled as soon as it is released.}
        \label{fig:reduction}
    \end{figure}
    \fi
    
    For the remainder of the argument, we refer to the set of ports added in step $i$ as $U_i$ for $i = 3,4,5$, with $U = \bigcup U_i$. For a given function $f : [m] \times [m'] \times H \to \{0,1\}$ we construct a schedule $\schedule{}$ as follows. For all flows $e = p_i q_j \in E$ with $p_i, q_j \in V \setminus U$, we have $\schedule{}$ schedule $e$ in round $h$ if and only if $f(i,j,h) = 1$. 
    For each $j \in [m']$, there are three flows $q_jv ,q_jw$, and $q_jy$ such that $q_j \in V$ and $v,w,y \in U_3$, which $\schedule{}$ schedules in rounds 4, 5, and 6, respectively.
    For all $i$ such that $T_i = \{1,3\}$, there are four flows $p_iu, uw, uy, uz$ with ports $u,v,w,y \in U_4$. $\schedule{}$ schedules $p_iu$ in round 2 and schedules $uw, uy$, and $uz$ in rounds 3, 4, and 5, respectively.
    For all $i$ such that $T_i = \{1,2\}$, there are four flows $p_iq, qw, qy$, and $qz$ with ports $q,v,w,y \in U_5$. $\schedule{}$ schedules $p_iq$ in round 3 and schedules $qw, qy$, and $qz$ in rounds 4, 5, and 6, respectively.
    
    Suppose $f$ satisfies conditions \ref{RTT:teachercanteach}, \ref{RTT:classistaught}, \ref{RTT:oneteacherperclass}, and \ref{RTT:oneclassperteacher}. We show that $\schedule{}$ is a schedule of $\instance{m,m'}$ with maximum response time $\response{}$. By construction of $\schedule{}$, we have that all flows with a port in $U$ are scheduled within three rounds of their release, so we need only to show that all flows $pq$ with $p,q \in P \setminus U$ are scheduled and that there is at most one scheduled flow adjacent to every port in every round. The first follows from condition \ref{RTT:classistaught} of $f$. Suppose there is port that has two adjacent, scheduled flows in one round. By conditions \ref{RTT:oneteacherperclass} and \ref{RTT:oneclassperteacher} of $f$, one of these flows $pq$ must have one left endpoint $p_i \in P \setminus U$ and the other $q \in U$. However, all such flows are scheduled in rounds $h \not\in T_i$, violating condition \ref{RTT:teachercanteach}.
    
    Suppose $\schedule{}$ is a schedule of $\instance{m,m'}$ with maximum response time $\response{}$. We show that $f$ satisfies conditions \ref{RTT:teachercanteach}, \ref{RTT:classistaught}, \ref{RTT:oneteacherperclass}, and \ref{RTT:oneclassperteacher}. 
    By definition of a schedule, $f$ satisfies conditions \ref{RTT:classistaught}, \ref{RTT:oneteacherperclass} and \ref{RTT:oneclassperteacher}. 
    By construction, all flows with an endpoint in $U_3$ must be scheduled in rounds 4, 5, and 6, so all flows $pq$ with $p,q \in P \setminus U$ must be scheduled in rounds 1, 2, or 3. Consider a flow with one endpoint $p_i \in P \setminus U$ and the other $q \in U_4$. Then $q$ has three additional flows that must be scheduled in rounds 3, 4, and 5 in order to schedule all its flows within the response time. So, the flow $p_iq$ must be scheduled in round 2. This entails that all flows $pq$ with $p,q \in V \setminus U$ must be scheduled in rounds 1 and 3. Therefore, $f$ satisfies condition \ref{RTT:teachercanteach}. 
\end{proof}
\else
Due to space constraints, we defer the proof to the full paper~\cite{jahanjou+rs:flowFull}.
\fi 

    \subsection{Maximum Response Time Approximation}
    \label{sec:approxmaxresponse}
    In this section, we give an approximation algorithm for Flow Scheduling to Minimize Maximum Response Time (\fsmrt). In fact, the algorithm solves a more general problem which we call \emph{Time-Constrained Flow Scheduling},  for which there is also an easy reduction from \fsmrt.
Time-Constrained Flow Scheduling is identical to \fsmrt\ except that flows do not have corresponding release times. Instead, each flow e, has a corresponding set of (possibly non-contiguous) \textit{active} rounds $\free{e}$ such that $e$ can be scheduled in any round $t \in \free{e}$.
Observe that an instance of \fsmrt\ that is solvable with a maximum response time of $\rho$ can be reduced to an instance of Time-Constrained Scheduling where $R(e) = \{t:\release{e} \le t < \release{e} + \rho\}$ for all flows $e$. Therefore, the approximability of Time-Constrained Scheduling transfers directly to \fsmrt.
\ignore{
We can see that \fsmrt\ reduces to Time-Constrained Scheduling by setting $\free{e} \leftarrow \{t:t\ge\release{e}\}$ for all flows $e$. \david{change to: $\{t:\release{e} \le t \le \release{e} + \rho\}$ for all flows $e$ and a guessed max response time $\rho$.} Therefore, the approximation result for Time-Constrained Scheduling transfers directly to \fsmrt.}

\ignore{
In Time-Constrained Flow Scheduling, we are given a switch $S_{n,m} = (V,E)$ where $V$ is a set of $n$ left ports and $m$ right ports, and $E$ is a set of edges with one left and one right port. Each port $v$ has a corresponding capacity $c_v$ and each edge $e \in E$ has a corresponding set of active rounds $\free{e}$. The problem runs over multiple \textit{rounds} in which some elements of $E$ are scheduled. The objective is to schedule all edges in some round subject to the condition that, for any port $v$, the number of edges that are adjacent to $v$ and are scheduled in a single round is no more than $c_v$. }

\iflong
\begin{remark}
    \emph{
    We note that Time-Constrained Flow Scheduling also generalizes the model in which each edge $e$ has both a release time $r_e$ and a \emph{deadline} $d_e$. In this model, an edge can be scheduled in any round $t$ such that $r_e \le t \le d_e$. Thus, the approximation result proved below applies in the deadline model as well.
    }
\end{remark}
\fi 

\paragraph{Linear Programming Relaxation.} 
We provide a linear programming relaxation of Time-Constrained Flow Scheduling. Let $T = \{t \in \free{e}\}_e$ be the set of rounds in which some edge can be scheduled.
\begin{align}
    \sum_{e\in \flows{p}} \size{e} x_{e,t} &\le \capacity{p} &\forall p \in P, t \in T
    \label{LP:capacity}
    \\
    \sum_{t \in \free{e}} x_{e,t} &= 1 &\forall e \in F
    \label{LP:request}
    \\
    x_{e,t} &\ge 0 &\forall e \in F, t \in T
    \label{LP:nonneg}
\end{align}
The variable $x_{e,t}$ denotes the fraction of flow $e$ scheduled in round $t$.
Constraint (\ref{LP:capacity}) ensures that the total size of all edges adjacent to a port that are scheduled in a round is no more than the port's capacity. Constraint (\ref{LP:request}) ensures that all edges are scheduled.

\begin{theorem}
    Given an instance $\instance{m,m'}$ of Time-Constrained Flow Scheduling, we can either determine that there is no schedule of $\instance{m,m'}$ or produce a schedule in which the capacity of each port has been increased by $2\size{\max}-1$.
    \label{thm:capacity_increase}
\end{theorem}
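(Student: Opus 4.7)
The plan is to apply an iterative LP rounding in the style of Karp, Leighton, Rivest, Thompson, Vazirani, and Vazirani~\cite{Karp87globalwire}. First I would solve the linear program (\ref{LP:capacity})--(\ref{LP:nonneg}); if the LP is infeasible there can be no integral schedule, since any integral schedule yields a feasible $\{0,1\}$-solution via $x_{e,t}=1$ iff $e$ is scheduled at round $t$, and the algorithm reports infeasibility. Otherwise, let $x^{*}$ be a basic feasible solution. I will round $x^{*}$ to an integral solution $\bar{x}$ preserving the equality (\ref{LP:request}) for every flow, so that $\bar{x}$ corresponds to a valid time-constrained schedule, while relaxing each port--round capacity (\ref{LP:capacity}) by at most $2 d_{\max} - 1$.

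For the rounding step I would construct an auxiliary graph $H$ whose vertex set is the disjoint union of one vertex per flow $e$ (representing constraint (\ref{LP:request})) and one vertex per port--round pair $(p,t)$ (representing constraint (\ref{LP:capacity})). For each strictly fractional variable $x^{*}_{e,t}$ with $e = pq$, I place two \emph{paired} edges in $H$: one joining $e$ to $(p,t)$ and one joining $e$ to $(q,t)$; paired edges are constrained to always take the same value, which is $x^{*}_{e,t}$. Note that each flow vertex has degree equal to the number of fractional rounds for $e$, and each port--round vertex $(p,t)$ has degree equal to the number of fractional flows incident to $p$ at time $t$.

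While $H$ contains a cycle, I would apply the standard alternating $\pm\varepsilon$ perturbation along the cycle, with $\varepsilon$ chosen maximal so that all values remain in $[0,1]$. Because paired edges move together, and because each cycle visits port--round vertices an even number of times under the induced sign pattern, the perturbation preserves every flow equality (\ref{LP:request}) and every port capacity expression (\ref{LP:capacity}) exactly, while forcing at least one variable to hit an integer bound, after which it is fixed and deleted. When $H$ becomes a forest, I peel leaves: a leaf flow-vertex forces its last fractional value to $1$ to meet (\ref{LP:request}), and a leaf port--round vertex allows its last variable to be rounded arbitrarily. A counting argument on the BFS---the number of nonintegral variables is bounded by the number of tight constraints---shows that, for each port--round pair $(p,t)$, at most one ``rounded up'' leaf event occurs from each of the two sides of the switch that $(p,t)$ interacts with, and each such event inflates $\sum_{e \in F_p} d_e \bar{x}_{e,t}$ by less than $d_{\max}$; combining these and subtracting one unit for the strict inequality that follows from $x^{*}_{e,t} \in [0,1]$ yields the bound of $2 d_{\max} - 1$.

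The main obstacle I anticipate is the two-port coupling of the variables: unlike the classical bipartite rounding in which each fractional variable appears in exactly one constraint on each side of the bipartition, here each $x_{e,t}$ participates in two distinct port--round constraints, one per endpoint of $e$, so the natural support is a hypergraph. The paired-edge device reduces the hypergraph to a bipartite multigraph, but one must verify that the sign pattern along each cycle is consistent for both members of every pair (so that cancellation preserves both port capacities simultaneously) and that the leaf-peeling charging does not double-count. The factor of two in the bound corresponds directly to the two endpoints of each flow, while the $-1$ refinement follows from the standard observation that a single leaf round-up cannot saturate both the $d_{\max}$ margin and the strict bound $x^{*}_{e,t}<1$ that made the variable a leaf in the first place.
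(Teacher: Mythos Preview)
Your proposal has a genuine gap in the rounding step. The cycle-cancellation argument does not go through as described, for two reasons.

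First, the port capacity constraints (\ref{LP:capacity}) carry coefficients $d_e$, not $1$. An alternating $\pm\varepsilon$ perturbation of the \emph{variables} along a cycle therefore changes $\sum_{e\in F_p} d_e\, x_{e,t}$ by $\pm(d_{e_1}-d_{e_2})\varepsilon$ at each port--round vertex, not by zero; your claim that the perturbation ``preserves \ldots every port capacity expression exactly'' is false unless all demands are equal.

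Second, the paired-edge device does not resolve the three-way incidence. If a cycle passes through a flow vertex $e$ via the two paired edges of a single variable $x_{e,t}$, the alternating signs force the two edges to move in opposite directions, contradicting the pairing constraint. If instead it passes through via edges of two distinct variables $x_{e,t}$ and $x_{e,t'}$, then changing $x_{e,t}$ also changes its partner edge $(e,(q,t))$, which lies \emph{off} the cycle; now you are perturbing edges outside the cycle, and the argument that port--round sums are preserved at $(q,t)$ collapses. You flag this as the ``main obstacle'' but do not resolve it, and the leaf-peeling charge (``at most one rounded-up leaf event from each side'') is asserted without proof.

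The paper sidesteps all of this by invoking the rounding theorem of~\cite{Karp87globalwire} (stated here as Lemma~\ref{lem:rounding}) as a black box, rather than re-engineering its proof. The only work is to put the system in the right form: drop the nonnegativity rows, and multiply each request row (\ref{LP:request}) by $-2d_{\max}$. Then in every column the positive entries (two copies of $d_e$ from the two capacity rows) sum to at most $2d_{\max}$, and the single negative entry is $-2d_{\max}$, so $\Delta = 2d_{\max}$. The lemma yields an integral $\hat{x}$ with $\hat{b}_i - \lceil b_i\rceil \le 2d_{\max}-1$ on every row. For capacity rows this is exactly the claimed additive violation; for request rows, dividing back by $2d_{\max}$ gives a discrepancy strictly less than $1$, hence zero by integrality, so every flow is scheduled exactly once. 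This is both shorter and avoids the coupling issue entirely.
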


To prove Theorem~\ref{thm:capacity_increase}, we invoke the following lemma which is proved in  \cite{Karp87globalwire}.
\iflong The lemma implies that, for any solution to the LP, there is a rounded solution where the difference in values is bounded by the sum of positive or negative coefficients for any given variable in \program{\ref{LP:capacity}}{\ref{LP:nonneg}}. \fi

\begin{lemma}[Theorem 3 in \cite{Karp87globalwire}]
    Let $\varmatrix{A}$ be a real-valued $r \times s$ matrix, let $\varmatrix{x}$ be a real-valued $s$-vector, let $\varmatrix{b}$ be a real-valued $r$-vector such that $\varmatrix{Ax} = \varmatrix{b}$, and let $\Delta$ be a positive real number such that in every column of $\varmatrix{A}$ we have \begin{enumerate*}[label=(\alph*)]
        \item the sum of the positive elements is at most $\Delta$ and
        \label{rounding:pos}
        \item the sum of the negative elements is at least $- \Delta$.
        \label{rounding:neg}
        Then we can compute an integral $s$-vector $\varmatrix{\hat{x}}$ such that
        \item for all $i$, $1 \le i \le s$, either $\varmatrix{\hat{x}}_i = \lfloor \varmatrix{x}_i \rfloor$ or $\varmatrix{\hat{x}}_i = \lceil \varmatrix{x}_i \rceil$, and
        \label{rounding:integral}
        \item $\varmatrix{A\hat{x}} = \varmatrix{\hat{b}}$, where $\varmatrix{\hat{b}}_i - \varmatrix{b}_i < \Delta$ for $1 \le i \le r$. In the case that all entries in $\varmatrix{A}$ are integers, then a stronger bound applies: $\varmatrix{\hat{b}}_i - \lceil \varmatrix{b}_i \rceil \le \Delta - 1$.
        \label{rounding:difference}
    \end{enumerate*}
\label{lem:rounding}
\end{lemma}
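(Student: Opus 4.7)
The plan is to reduce Time-Constrained Flow Scheduling to the matrix-rounding question addressed by Lemma~\ref{lem:rounding}. First, I would solve the linear programming relaxation (\ref{LP:capacity})--(\ref{LP:nonneg}) to optimality. If the LP is infeasible, I declare that no schedule of $\instance{m,m'}$ exists; this is correct because any valid (integer) schedule $\schedule{}$ induces a feasible fractional solution via $x_{e,t} := \schedule{e,t}$, so LP infeasibility certifies non-existence. Otherwise, I obtain a fractional $x^*$ and proceed to round it.

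Second, I would cast the LP into the equality form $\varmatrix{A}\varmatrix{x} = \varmatrix{b}$ required by Lemma~\ref{lem:rounding} by introducing a non-negative slack $s_{p,t}$ for each capacity inequality, rewriting (\ref{LP:capacity}) as $\sum_{e \in \flows{p}} \size{e}\, x_{e,t} + s_{p,t} = \capacity{p}$. All entries of $\varmatrix{A}$ and $\varmatrix{b}$ are then non-negative integers, so hypothesis (b) of the lemma is vacuous and the strengthened integer-entry conclusion (d) will apply. Each column corresponding to a variable $x_{e,t}$ has exactly three non-zeros: a $1$ in the request row for $e$, and $\size{e}$ in each of the two capacity rows for the endpoints of $e$ at round $t$. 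Each slack column has a single entry of $1$.

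Third, I would invoke Lemma~\ref{lem:rounding} with $\Delta := 2\size{\max}$, producing an integral $\hat{\varmatrix{x}}$ each of whose coordinates is either $\lfloor x^*_{e,t} \rfloor$ or $\lceil x^*_{e,t} \rceil$; since $x^*_{e,t} \in [0,1]$, this forces $\hat{x}_{e,t} \in \{0,1\}$. Conclusion (d) in the integer case gives $\hat{\varmatrix{b}}_i - \lceil \varmatrix{b}_i \rceil \le \Delta - 1 = 2\size{\max} - 1$, which, read off the capacity row indexed by $(p,t)$, yields $\sum_{e \in \flows{p}} \size{e}\, \hat{x}_{e,t} \le \capacity{p} + 2\size{\max} - 1$. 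Decoding $\hat{\varmatrix{x}}$ as a schedule therefore produces the promised schedule with port capacity augmented by at most $2\size{\max} - 1$.

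The main obstacle is the unit entry contributed by the request row to the column sum $1 + 2\size{e}$, which naively forces $\Delta = 2\size{\max} + 1$ and worsens the augmentation by one. I would absorb this by handling the request equalities separately from the capacity rounding: apply Lemma~\ref{lem:rounding} only to the capacity subsystem (whose column sums are exactly $2\size{e} \le 2\size{\max}$), while treating the request equalities as invariants preserved by the rounding, for example by exploiting the sparsity of a vertex solution of the LP so that only a bounded number of flows have fractional support and a single round per flow can be selected combinatorially consistent with the rounding. Once this coupling is resolved, the capacity augmentation bound $2\size{\max} - 1$ follows immediately from conclusion (d) of Lemma~\ref{lem:rounding}.
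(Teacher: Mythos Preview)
First, note that Lemma~\ref{lem:rounding} is quoted from~\cite{Karp87globalwire} and is not proved in the paper; what you have written is actually a proof of Theorem~\ref{thm:capacity_increase}, which \emph{uses} the lemma. Comparing to the paper's proof of that theorem: your overall plan (solve the LP, add slacks, invoke the lemma with $\Delta=2\size{\max}$, read off the $2\size{\max}-1$ augmentation from conclusion~(d)) matches, and you correctly isolate the obstacle that the request row contributes $+1$ to each column's positive sum.

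The gap is your proposed fix. Dropping the request rows from $\varmatrix{A}$ and calling $\sum_t x_{e,t}=1$ an ``invariant preserved by the rounding'' does not work: Lemma~\ref{lem:rounding} guarantees nothing about rows you omit, so applied to the capacity-only subsystem it is free to round every fractional $x_{e,t}$ of a given flow to~$0$ and leave that flow unscheduled. The appeal to BFS sparsity is hand-waving; you give no mechanism that forces one coordinate per flow to become~$1$ while simultaneously keeping the capacity error at $2\size{\max}-1$. The paper's actual device is to keep the request rows inside $\varmatrix{A}$ but scale each by $2\size{\max}$ and \emph{negate} it. Then the column of $x_{e,t}$ has positive entries $\size{e},\size{e}$ (the two capacity rows) summing to at most $2\size{\max}$, and a single negative entry $-2\size{\max}$ (the request row), so hypotheses~(a) and~(b) both hold with $\Delta=2\size{\max}$. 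Conclusion~(d) on a request row then gives deviation below $2\size{\max}$, which after dividing out the scale factor is below~$1$ and hence, by integrality, exactly zero---so the request equalities are met. On the capacity rows, the integer form of~(d) yields the $2\size{\max}-1$ augmentation. This scale-and-negate step is the idea your proposal is missing.
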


\begin{proof}[Proof of Theorem \ref{thm:capacity_increase}]
    We first show that LP is a valid relaxation of Time-Constrained Flow Scheduling. We convert a schedule $\schedule{}$ of an arbitrary instance of Time-Constrained Flow Scheduling into a feasible LP solution. For each flow $e$, if $\schedule{}$ schedules $e$ in round $t$, then we set $x_{e,t}$ to 1 and set it to 0 otherwise. By the port capacity restrictions on $\schedule{}$, we have that Constraint~(\ref{LP:capacity}) is satisfied. Also, since all edges must be scheduled in some round of $\schedule{}$, we have that Constraint~(\ref{LP:request}) is satisfied. Constraint~(\ref{LP:nonneg}) is trivially satisfied. 
    
    We now rewrite \program{\ref{LP:capacity}}{\ref{LP:nonneg}} in matrix form as $\varmatrix{A}_{LP} \varmatrix{x} = \varmatrix{b}_{LP}$ with the use of slack variables. Then, for a given instance of Time-Constrained Flow Scheduling, we solve the program. This either outputs that there is no solution or produces a solution vector $\varmatrix{x}^*$. In the former case, we use the fact that LP is a valid relaxation of Time-Constrained Flow Scheduling to determine that there is no feasible solution to the given instance. 
    If the LP solver provides a solution vector $\varmatrix{x}^*$, we rewrite $\varmatrix{A}_{LP}$ and $\varmatrix{b}_{LP}$ as follows. Let $\size{\max} = \max_{e \in F} \{\size{e}\}$. Let $\varmatrix{A}$ and $\varmatrix{b}$ be identical to $\varmatrix{A}_{LP}$ and $\varmatrix{b}_{LP}$ except that all rows corresponding to Constraint (\ref{LP:nonneg}) have been removed, and all values in rows corresponding to constraint (\ref{LP:request}) have been multiplied by $\size{\max}$ and made negative.
    Let $\Delta = \size{\max}$. 
    
    We show that $\varmatrix{A}, \varmatrix{x}, \varmatrix{b}$, and $\Delta$ satisfy the conditions of Lemma~\ref{lem:rounding}.
    By construction we have $\varmatrix{A} \varmatrix{x} = \varmatrix{b}$.
    Let the columns of $\varmatrix{A}$ be indexed by $t \in [T]$.
    Let $pq = e \in F$ be a flow in the given problem instance. Constraint (\ref{LP:capacity}) entails that the coefficient $\size{pq}$ would occur twice in a single column $t$: once for $p$ and once for $q$. Similarly, Constraint (\ref{LP:request}) guarantees that $-2\size{\max}$ occurs once in each column $t$ for $e$. So, conditions \ref{rounding:pos} and \ref{rounding:neg} are satisfied for $\varmatrix{A}, \varmatrix{b},\varmatrix{x}$, and $\Delta$.
    
    Lemma \ref{lem:rounding}, therefore, entails the existence of matrices $\varmatrix{\hat{b}},$ and $\varmatrix{\hat{x}}$ that have properties \ref{rounding:integral} and \ref{rounding:difference}. Property \ref{rounding:integral} entails that all values in $\varmatrix{\hat{x}}$ are integral.
    Since all elements of $\varmatrix{A}$ are integral, we have that all elements of $\varmatrix{\hat{b}}$ are integral as well. Property \ref{rounding:difference} entails that the difference between the values in $\varmatrix{\hat{b}}$ and $\varmatrix{b}$ is strictly less than $2\size{\max}$ and so is at most $2\size{\max} - 1$. 
    
    Recall that all elements of $\varmatrix{A}$ corresponding to Constraint~(\ref{LP:request}) have been multiplied by $2\size{\max}$. Therefore, by dividing these values by $2\size{\max}$, we get a matrix $\varmatrix{b}'$ such that the difference in values $\varmatrix{b}'$ and $\varmatrix{b}$ corresponding to Constraint~(\ref{LP:request}) are strictly less than 1. Since all values are integral, this entail that the difference is 0. 
    \ignore{
    For all elements of $\varmatrix{\hat{b}}$ corresponding to Constraint~(\ref{LP:request}), we can infer the difference is 0. This is because we have doubled all elements of $\varmatrix{A}$ for these constraints, so halving these elements gives new integral values for $\varmatrix{b}$ and $\varmatrix{\hat{b}}$ and this difference must remain integral because $\varmatrix{A}$ and $\varmatrix{\hat{x}}$ are integral.
    }
    Therefore, the schedule given by $\varmatrix{\hat{x}}$ satisfies Constraint~(\ref{LP:request}), and all values corresponding to Constraint~(\ref{LP:capacity}) are off by at most $2\size{\max}-1$. Therefore, if we increase the capacity of each port by $2\size{\max}-1$, we can feasibly schedule all edges in their active rounds.
\end{proof}
\iflong
\begin{remark}
    \emph{
    In the setting with unit demand flows, note that Theorem~\ref{thm:capacity_increase} provides a \textit{tight} approximation: by proof of Theorem~\ref{thm:np_hardness}, it is \textbf{NP}-hard to find an optimal schedule when minimizing maximum makespan even with unit capacities, and increasing capacities by 1 is the smallest change that can be made. 
    }
\end{remark}
\fi
\ignore{

\begin{theorem}
    For a given instance $G$ of Time-Constrained Flow Scheduling, let $G^+$ be identical to $G$ except that all port capacities are increased by one. Then, for any instance $G$, we can either determine that there is no solution to $G$ or find a solution to $G^+$. 
    Further, there exists an instance $G$ such that no solution exists for $G$ but a solution exists for $G^+$.
    \label{thm:timeconstrainedapprox}
\end{theorem}

\paragraph{Integrality Gap.}
We now show that this result is tight in the sense that there exist instances which have fractional solutions but have no integral solutions. That is, in order to match the fractional solution some aspect of the problem must be relaxed, and increasing the capacities by one is among the smallest such relaxations possible. 

\begin{figure}
    \centering
    \includegraphics{Figures/flow integrality gap.pdf}
    \caption{All edges are shown in their active rounds. All ports have capacity one.}
    \label{fig:gap_instance}
\end{figure}

\begin{lemma}
    Let $G$ be the instance of Time-Constrained Flow Scheduling depicted in Figure \ref{fig:gap_instance} and let $G^+$ be identical to $G$ except that all capacities have been increased by one.
    Then there exists a solution to $G^+$ and there exists no solution to $G$. 
    \label{lem:integrality_gap}
\end{lemma}

\begin{proof}
    We first show that there exists a solution to $G^+$. By construction, all ports have capacity one, all edges appear in two rounds, and the maximum degree of any port in any round is two. Therefore, we obtain a solution to LP by setting $x_{e,t} \leftarrow 1/2$ for each round $t \in \free{e}$. By Lemma \ref{lem:capacity_increase} we obtain a solution to $G^+$. 

    We now show that there exists no solution to $G$.
    Since there are eight edges total, we must schedule two edges in round one and three edges in rounds two and three. Suppose we schedule edges $az$ and $bx$ in round one. In this case, we must schedule edges $ax$ and $bz$ in round two (since it is the last round these edges are available). However, in this case we cannot schedule a third edge in round two. 

    Therefore, we must schedule edges $ax$ and $by$ in round one. Suppose we schedule $ax, by$, and $cz$ in round two. In this case, both edges $ay$ and $az$ remain in round 3 and both cannot be scheduled. So we must execute edges $ay, bz$, and $cx$ in round two. However, in this case, edges $az$ and $cz$ remain in round 3 and both cannot be executed. Therefore, there is no integral solution to this instance.
\end{proof}

\begin{theorem}
    Let $\mathcal{G}$ be the family of problem instances with exactly three rounds such that every edge is active for either two or three rounds. Then there is no polynomial time algorithm to decide, for any member $G$ of $\mathcal{G}$, if $G$ has an integral solution, assuming P $\ne$ NP.
\end{theorem}

To prove the theorem we reduce from the restricted timetable problem \cite{Even76multicommodity}.

\begin{definition}[Restricted Timetable (RTT) problem]
    Given the following data:
    \begin{enumerate}[label=\roman*.]
        \item $H = \{1,2,3\}$ 
        \item a collection $\{ T_1, \ldots, T_n\}$ with $T_i \subseteq H$ and $|T_i| \ge 2$
        \item a function $g: [n] \to 2^{[m]}$ such that $|g(i)| = |T_i|$
    \end{enumerate}
    determine if there is a function $f:[n] \times [m] \times H \to \{0,1\}$ such that
    \begin{enumerate}[label=(\roman*), resume]
        \item $f(i,j,h) = 1 \Rightarrow j \in g(i)$
        \item $j \in g(i) \iff \sum_{h \in H} f(i,j,h) \ge 1$ for all $i \in [n]$ and $j \in [m]$
        \item $\sum_{i\in [n]} f(i,j,h) \le 1$ for all $j \in [m]$ and $h \in H$
        \item $\sum_{j \in [m]} f(i,j,h) \le 1$ for all $i \in [n]$ and $h \in H$
    \end{enumerate}
\end{definition}

\begin{proof}[Proof of Lemma \ref{thm:np_hardness}]
    Let $I$ be an arbitrary instance of the RTT problem. We reduce this instance to an instance $G$ of the flow scheduling problem. In the flow scheduling problem, we set the number of rounds $T = |H| = 3$. The $n_1$ left nodes of $V$ each correspond to a single $T_i$ of $I$ and the $n_2$ right nodes of $V$ correspond to some integer $j \in [m]$. If $j \in g(i)$ and $h \in T_i$, then we include the edge $ij$ as an active edge in round $h$, i.e. $h \in \free{e}$.
    
    Suppose there is some function $f$ that satisfies conditions \ref{RTT:teachercanteach}, \ref{RTT:classistaught}, \ref{RTT:oneteacherperclass}, and \ref{RTT:oneclassperteacher} for $I$. We argue that there is some solution $S$ to $G$. We construct $S$ as follows: $S$ schedules edge $ij$ in round $h$ if and only if $f(i,j,h) = 1$. By \ref{RTT:teachercanteach}, we have that all scheduled edges are scheduled in rounds in which they are active. By \ref{RTT:classistaught} we have that every edge is scheduled in some round. Finally, by \ref{RTT:oneteacherperclass} and \ref{RTT:oneclassperteacher} we have that the set of scheduled edges in any round forms a matching. Therefore, $S$ is a solution to $G$.
    
    Suppose there is some valid schedule of $G$. [this argument is effectively the converse of the one above. shorten the argument to use biconditionals throughout?]
    
    Therefore, RTT reduces to Time-Constrained Flow Scheduling.
    \cite{Even76multicommodity} shows that 3-SAT reduces to RTT, which implies that Time-Constrained Flow Scheduling is NP hard.
\end{proof}

}
    
\section{Online Flow Scheduling}
\label{sec:online}
We next consider a natural online version of flow scheduling, in which the sequence of flow requests is not available in advance; the scheduler learns about a request only at the request's release time.  We use the standard framework of competitive analysis, and present some preliminary theoretical results in Section~\ref{sec:onlineprelim}, and experimental results in Section~\ref{sec:experiments}. 

 
    \subsection{Preliminary Theoretical Results}
    \label{sec:onlineprelim} 
    
In this section, we establish several preliminary results for flow scheduling in the online setting. We first describe two lower bounds on the quality of any online approximation for both the average response time and maximum response time objectives. We then provide an online approximation for maximum response time that uses our offline algorithm, described above, as a subroutine. \iflong
\else
Due to space constraints, we defer all proofs to the full paper~\cite{jahanjou+rs:flowFull}.
\fi

\iflong
Recent work in~\cite{dinitz+moseley:schedulingflowsreconfig} for reconfigurable networks, when applied to our model, implies that, for any positive integer $c$, there exists a  $O(1/c^2)$-competitive algorithm for average response time, assuming that the port capacities of the algorithm are $(2 + c)$ times that of the optimal.  The following lemma shows that there is no online algorithm with a bounded competitive ratio for average response time without resource augmentation.
\else
The following lemma shows that there is no online algorithm with a bounded competitive ratio for average response time.
\fi
\iflong
We present a proof for completeness. 
\fi
\begin{lemma}[\cite{kulkarni:flow.lower}]
\label{lem:online_average_lower}
For any $M$, there is an instance $I$ of flow scheduling such that the average response time of the schedule produced by any online algorithm on $I$ is at least $M$ times the average response time of the optimal schedule of $I$.
\end{lemma}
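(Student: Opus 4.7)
The plan is to construct, for any target ratio $M > 0$ and any deterministic online algorithm $\mathcal{A}$, an adversarial instance $I$ on which $\mathcal{A}$'s average response time exceeds $M$ times that of the offline optimum. The instance will be built interactively, with subsequent releases chosen by the adversary based on $\mathcal{A}$'s past scheduling commitments, exploiting the fundamental asymmetry that $\mathcal{A}$ must commit to decisions based only on past information while the adversary has free rein over the future.

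Concretely, I would work on a switch with $\Theta(M)$ input and output ports (unit capacities, unit-demand flows) and build $I$ in $T = \Theta(M)$ phases. In phase $k$ (at time $t_k$), the adversary releases a pair of flows $e_k, e_k'$ that share a fresh input port, so that $\mathcal{A}$ can schedule at most one of them in the round of release and must defer the other. After $\mathcal{A}$ commits---say, it schedules $e_k$ and defers $e_k'$---the adversary, at time $t_{k+1}$, releases a new flow $g_k$ on a fresh input port but whose output port coincides with that of $e_k'$. This ensures $g_k$ and $e_k'$ collide on their shared output port, forcing one of them to be further delayed. Iterating this targeting of $\mathcal{A}$'s deferred flow across phases creates a chain of forced delays in $\mathcal{A}$'s schedule with total extra response time $\Omega(T \cdot n)$, where $n$ is the number of flows.

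By contrast, the offline optimum knows the entire release sequence in advance and can choose, in each phase $k$, to schedule $e_k'$ first (rather than $e_k$), freeing up the output port of $e_k'$ before $g_k$ arrives. With this foresight-driven reordering, every adversarial collision is preempted, OPT schedules each flow in its release round or the next one, and the total response time is $O(n)$. The ratio of the two average response times is therefore $\Omega(T) = \Omega(M)$, as required.

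The main obstacle is verifying that, in each phase, the adversarial release $g_k$ genuinely forces a new delay on $\mathcal{A}$ regardless of how $\mathcal{A}$ reacts when $g_k$ arrives---in particular, even if $\mathcal{A}$ re-plans its remaining schedule. Resolving this requires careful port assignments so that each $g_k$ conflicts with exactly one of $\mathcal{A}$'s currently deferred flows and with no other previously released flow; this way, $\mathcal{A}$'s only local options (scheduling $e_k'$ ahead of $g_k$, or deferring $g_k$ behind $e_k'$) both add one round of delay, and delays from different phases accumulate additively rather than cancel. A combinatorial bookkeeping argument then converts the per-phase delay into the claimed $\Omega(M)$ ratio.
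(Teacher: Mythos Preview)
Your construction does not achieve the claimed ratio. Each phase is independent: the flows $e_k, e_k', g_k$ use fresh ports except for the single intended conflict between $e_k'$ and $g_k$, so the deferred flow from phase $k$ clears by time $t_k + 2$ regardless of what happens in other phases. A direct count shows that any online algorithm incurs total response time at most $5$ per phase (e.g., responses $1, 2, 2$ for $e_k, e_k', g_k$), while your OPT achieves $4$ per phase (responses $2, 1, 1$). Over $T$ phases with $n = 3T$ flows, the ratio is $5/4$, not $\Omega(M)$. Your claim of ``total extra response time $\Omega(T \cdot n)$'' is off by a factor of $T$: the per-phase extra delay of one round accumulates \emph{additively} to $O(T) = O(n)$, which is swamped by the $\Omega(n)$ baseline that any schedule must pay. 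There is no chain---nothing links the deferred flow of phase $k$ to the conflicts of phase $k+1$.

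The missing idea is that the adversary must create a \emph{persistent} backlog at a single port and then route many later flows through that same port, so that each later flow inherits the full backlog as delay. The paper does this on just four ports: in each round $1, \ldots, T$ it releases the two unit flows $(1,2)$ and $(1,3)$, which compete at input port $1$, so after round $T$ any online algorithm leaves at least $T/2$ of one type---say $(1,3)$---unscheduled. The adversary then releases one flow $(4,3)$ in every round $T+1, \ldots, M$; since arrivals match service at port $3$, the backlog of $T/2$ never clears, forcing total response time $\Omega((M-T)\cdot T/2)$ for the online algorithm. OPT, with foresight, schedules all $(1,3)$ flows first and then $(1,2)$ in parallel with the arriving $(4,3)$ flows, paying only $O(T^2 + M)$, so the ratio is $\Omega(T)$ for $M \gg T^2$. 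The structural difference from your approach is that the backlog lives at one fixed port and is replenished every round, so its cost is charged to \emph{every} subsequent flow rather than being confined to a single phase.
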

\iflong

\ifsigconf
\begin{figure}
    \centering
    \includegraphics{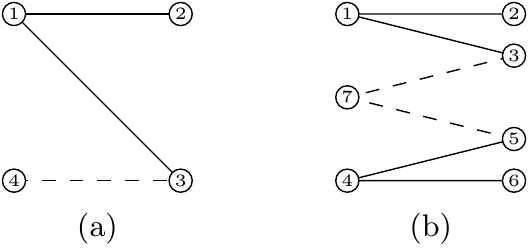}
    \caption{Lower bound constructions. All edges have unit demand and all ports have unit capacity. in (a), the solid flows arrive in each round $1,\ldots,T$ and the dashed flow arrives in each round $T+1,\ldots,M$ for $M \gg T$. In (b), the solid flows arrive in round 1 and the dashed flows arrive in round 2.}
    \label{fig:online_gaps}
\end{figure}
\else
\begin{figure}
    \centering
    \includegraphics[scale=1.2]{Figures/flow_onlinevsopt.pdf}
    \caption{Lower bound constructions. All edges have unit demand and all ports have unit capacity. in (a), the solid flows arrive in each round $1,\ldots,T$ and the dashed flow arrives in each round $T+1,\ldots,M$ for $M \gg T$. In (b), the solid flows arrive in round 1 and the dashed flows arrive in round 2.}
    \label{fig:online_gaps}
\end{figure}
\fi

\begin{proof}
    Let $\mathcal{A}$ be any online algorithm for flow scheduling and consider the problem instance given in Figure~\ref{fig:online_gaps}(a). After the first $T$ rounds, $\mathcal{A}$ will have at least $T/2$ flows remaining adjacent to either port 2 or 3. We suppose without loss of generality that it is  port 3. Let $U$ be the set of flows that are released after round $T$. In this case,
    \begin{align*}
        \sum_e \response{e} &\ge \sum_{e} \completion{e} - \sum_e \release{e} \ge MT - T^2/4.
    \end{align*}
    On the other hand, the optimal executes all flows $(1,3)$ in the first $T$ rounds and then flows $(1,2)$ in parallel with flows $(4,3)$ over the next $T$ rounds, resulting in a total response time $\le 2T$.  The desired claim follows since $M$ can be made arbitrarily large.
\end{proof}
\fi
The following lemma establishes a lower bound for maximum response time, using an argument similar to~\cite{jia+jin+etal:onlineschedulingWAN}.

\begin{lemma}
    There is an instance $I$ of flow scheduling such that the maximum response time of the schedule produced by any online algorithm on $I$ is at least $3/2$ times the optimal for $I$.
\end{lemma}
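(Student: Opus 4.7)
The plan is to use an adaptive adversary in the spirit of Figure~\ref{fig:online_gaps}(b): release a pair of conflicting unit-demand flows in round~$1$, observe which of them the online algorithm chooses to schedule, and then in round~$2$ release two new flows engineered to produce a three-flow bottleneck on the input port that the algorithm ``left behind.'' On the resulting instance the online algorithm must use three rounds to drain that port, whereas the offline optimum, knowing the full sequence in advance, avoids the bottleneck and finishes in just two rounds.

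More concretely, I would work on a unit-capacity switch and release $f_1 = (1,a)$ and $f_2 = (2,a)$ in round~$1$; these share output port $a$, so at most one can be scheduled in round~$1$. Let $i \in \{1,2\}$ be the input port of whichever flow the algorithm schedules in round~$1$ (chosen arbitrarily if neither is scheduled), and let $j$ be the other index. In round~$2$ the adversary releases $g_1 = (j,b)$ and $g_2 = (j,c)$. The algorithm's pending set at the start of round~$2$ then contains $f_{3-i}$, $g_1$, and $g_2$, all incident on the unit-capacity input port~$j$; at most one can be scheduled per round, so they occupy three distinct rounds, and the last of them completes no earlier than round~$4$. Hence some released flow achieves response time at least $3$. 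The degenerate case in which the algorithm schedules neither $f_1$ nor $f_2$ in round~$1$ only enlarges the pending set and forces the same bottleneck on port~$j$.

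Meanwhile, the offline optimum, knowing the full sequence in advance, schedules $f_j$ in round~$1$ and thereby clears input port~$j$ before the round-$2$ releases. In round~$2$ it schedules the compatible pair $\{f_i, g_1\}$ (they share no ports), and in round~$3$ it schedules $g_2$ alone; the resulting maximum response time is $2$. Combined with the previous paragraph this yields the claimed ratio of $3/2$. The only delicate point is justifying the adaptive-adversary framing: as in the Jia et al.\ argument cited in the excerpt, the adversary's strategy produces, for each online algorithm, a specific instance $I$ against which the algorithm attains the $3/2$ gap, which is the intended reading of the lemma.
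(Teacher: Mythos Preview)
Your proof is correct and shares the paper's core idea: release conflicting flows in round~1, see which one the algorithm defers, then in round~2 pile additional flows onto that deferred port so that three flows must drain through a single unit-capacity port, while the offline optimum---having cleared that port in round~1---finishes with maximum response time~2. The paper's construction is a bit larger (six flows on seven ports, with two parallel round-1 gadgets linked by a shared round-2 input port) and disguises the adaptivity as a ``without loss of generality'' symmetry step rather than as an explicit adversary. Your four-flow, five-port instance is more economical and makes the adaptive nature explicit; your closing caveat about the $\forall A\,\exists I$ versus $\exists I\,\forall A$ reading applies equally to the paper's own argument.
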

\iflong
\begin{proof}
    Consider the instance of flow scheduling given in Figure~\ref{fig:online_gaps}. We given an optimal schedule. Schedule flows $(1,3),(4,5)$ in the first round, then $(1,2),(7,3),(4,6)$ in the second, and $(7,5)$ in the third, resulting in a maximum response time of 2.
    Now consider any online algorithm $\mathcal{A}$. After round 1, $\mathcal{A}$ will leave two flows unscheduled. We suppose without loss of generality that these are $(1,3)$ and $(4,5)$. Upon arrival of the dashed flows, $\mathcal{A}$ either schedules $(1,3),(4,5)$ or one of $(7,3)$ or $(7,5)$. In either case, the maximum response time is 3.
\end{proof}
\fi
\begin{lemma}
    There is an online algorithm, which computes a schedule for any given instance $I$, with maximum response time at most double that of the optimal schedule of $I$, and where the  capacity of each port $p$ has been increased to $2(\capacity{p} + 2\size{\max} - 1)$ where $\size{\max} = \max_e \{\size{e}\}$.
    \label{res:onlinemaxapprox}
\end{lemma}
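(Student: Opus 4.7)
The plan is to use the offline algorithm of Theorem~\ref{thm:capacity_increase} as a subroutine, applied to batches of online arrivals, with batching windows tuned to the optimal maximum response time $\rho^{*}$. I will first describe the algorithm assuming $\rho^{*}$ is known, and then extend to the general case via guess-and-double.

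Partition time into consecutive windows $I_k = [k\rho^{*},(k+1)\rho^{*})$ of length $\rho^{*}$. At the start of each window $I_k$ (for $k \ge 1$), let $F_k$ be the set of flows released during the previous window $I_{k-1}$, and construct a Time-Constrained Flow Scheduling instance in which every $e \in F_k$ has active rounds $I_k$ and every port $p$ has (artificially inflated) capacity $2 c_p$. Apply Theorem~\ref{thm:capacity_increase} and follow the resulting integral schedule during $I_k$. Because each $e \in F_k$ is released in $I_{k-1}$ and scheduled in $I_k$, its response time is at most $2\rho^{*} - 1$, yielding the claimed factor-two guarantee on maximum response time.

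The main technical step is to verify that the batched instance admits a feasible LP solution at port capacity $2 c_p$, so that Theorem~\ref{thm:capacity_increase} rounds it to an integral schedule at capacity $2c_p + 2d_{\max} - 1 \le 2(c_p + 2d_{\max} - 1)$. I will exhibit feasibility by a compression argument with OPT as witness: since OPT has maximum response time $\rho^{*}$, each $e \in F_k$ is scheduled by OPT in some round of $[r_e, r_e + \rho^{*} - 1] \subseteq I_{k-1} \cup I_k$. Construct a schedule on $I_k$ for $F_k$ by retaining any portion of OPT's schedule that already lies in $I_k$ and shifting any portion lying at round $t \in I_{k-1}$ forward to round $t + \rho^{*} \in I_k$. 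The shifted schedule is contained entirely in $I_k$, and the total demand incident on any port $p$ in any round $t \in I_k$ is at most OPT's demand at $t$ plus OPT's demand at $t - \rho^{*}$, i.e., at most $2 c_p$. Setting $x_{e,t}$ to the $0/1$ indicator of this shifted schedule therefore yields the required LP-feasible solution.

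To remove the assumption that $\rho^{*}$ is known, I maintain a current guess $\hat{\rho}$ initialized to $1$, run the batched algorithm above with window length $\hat{\rho}$, and double $\hat{\rho}$ (re-aligning window boundaries on the doubled grid) whenever the offline algorithm reports infeasibility on some batch, which certifies $\hat{\rho} < \rho^{*}$. Because the compression witness above is feasible as soon as $\hat{\rho} \ge \rho^{*}$, the final guess satisfies $\hat{\rho} \le 2\rho^{*}$. The main obstacle will be bounding the extra delay accumulated across the doublings; a geometric-sum argument over the doubled window lengths shows that the total re-alignment slack is dominated by the final window length, and the extra $2 d_{\max}$ of unused room in the capacity bound $2(c_p + 2d_{\max} - 1)$ is enough to absorb the pending backlog at each doubling step, preserving the factor-two response-time guarantee.
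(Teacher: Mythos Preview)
For the known-$\rho^*$ case, your compression argument is a valid and genuinely different alternative to the paper's approach. You pack each batch $F_k$ into the single window $I_k$ at doubled port capacity $2c_p$, proving LP-feasibility by folding OPT's two-window schedule for $F_k$ into one window. The paper instead runs the offline subroutine at the original capacity $c_p$ with the original release times (for which OPT restricted to $F_k$ is directly a witness), obtaining a schedule lying in $I_{k-1}\cup I_k$, and then shifts it forward by $\rho$; the factor of two in capacity then arises because two consecutive shifted batches can overlap in time. Your route actually yields the slightly tighter bound $2c_p + 2d_{\max}-1$ and avoids the overlap bookkeeping altogether.

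The genuine gap is in your handling of unknown $\rho^*$. Guess-and-double stabilizes at some $\hat\rho \in [\rho^*, 2\rho^*)$, and once it does, every subsequent batch already incurs response time up to $2\hat\rho$, which can be essentially $4\rho^*$, not $2\rho^*$. The geometric-sum and extra-$2d_{\max}$-capacity arguments you sketch can at best control transient delays and backlogs during the doubling phase; they cannot repair this steady-state loss, since the overshoot in $\hat\rho$ persists for all later batches. The paper avoids this by incrementing the guess by one on each failure: because any batch (with its original release times) is feasible at capacity $c_p$ as soon as the guess reaches $\rho^*$, the guess never exceeds $\rho^*$, and every successfully scheduled flow has response time at most $2\rho^*$. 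If you want to keep doubling, you would need a batching scheme whose per-batch response time is $\hat\rho + \rho^*$ rather than $2\hat\rho$, which your single-window compression does not provide.
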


\ifsigconf
\begin{figure}
    \centering
    \includegraphics[width=\columnwidth]{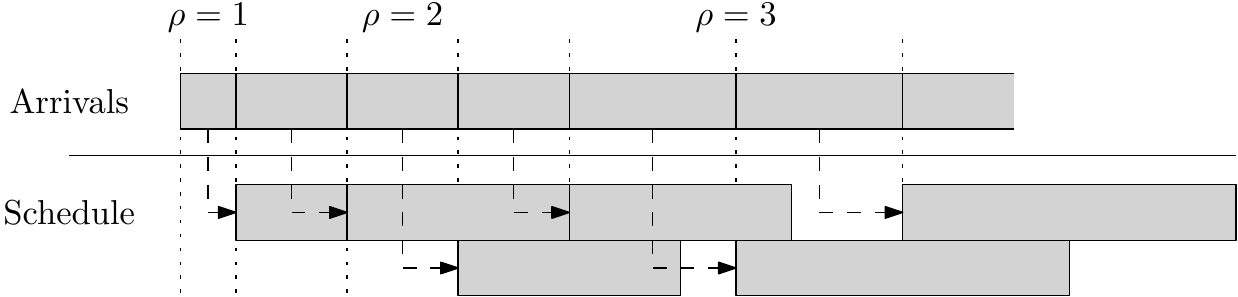}
    \caption{Gray boxes represent batches of flows. At each interval of the guessed max response time $\response{}$, those flows in the previous batch are scheduled starting in the current round. Note that at most two boxes ever overlap.}
    \label{fig:onlinemaxresponse}
\end{figure}
\else
\begin{figure}
    \centering
    \includegraphics[]{Figures/online_approx_new.pdf}
    \caption{Gray boxes represent batches of flows. At each interval of the guessed max response time $\response{}$, those flows in the previous batch are scheduled starting in the current round. Note that at most two boxes ever overlap.}
    \label{fig:onlinemaxresponse}
\end{figure}
\fi

\ignore{
 Let $\mathcal{A}$ be the offline maximum response time algorithm defined above. Let $\mathcal{A}(B)$ be the schedule produced by running $\mathcal{A}$ on the set of flows $B$. The the online algorithm is given as Algorith~\ref{alg:onlinemaxapprox}.
\begin{algorithm}[h]
    $\tau,t \leftarrow 1$\;
    $U \leftarrow \emptyset$\;
    \While{true}{
        $B \leftarrow \{e: \release{e} = t\}$\; 
        \uIf{$t = i\tau$ for some $i$}{
            \uIf{the max response time of $\mathcal{A}(B) \le 2^{\tau}$}{
                schedule all of $B$ by shifting $\mathcal{A}(B)$ to start at time $t$ and reset $U \leftarrow \emptyset$
            }
            \Else{
            $U \leftarrow U \cup B$\;
            $\tau \leftarrow \tau + 1$}
        }
        \Else{$U \leftarrow U \cup B$\;}
        $t \leftarrow t+1$\;
    }
    \caption{Online \fsmrt \ Approximation}
    \label{alg:onlinemaxapprox}
\end{algorithm}
}
Our online algorithm $\mathcal{A}_{MRT}$ is depicted in Figure~\ref{fig:onlinemaxresponse}. We informally define it here.
In each round $t$, check if $t$ is an integral value of the guessed maximum response time $\response{}$. If so, use the offline algorithm to check if all flows which arrived in the previous $\response{}$ rounds can be scheduled with maximum response time $\response{}$. If so, schedule them according to the offline algorithm starting in round $t$. Otherwise, increase the guessed $\response{}$ by one. 
\iflong

\begin{proof}[Proof of Lemma~\ref{res:onlinemaxapprox}]
    We first show that the maximum response time $\response{}$ of $\mathcal{A}_{MRT}$ is at most double the optimal maximum response time $\response{}^*$. In each round $t$ where $\mathcal{A}_{MRT}$ schedules a batch of flows,  it uses the scheduled derived from the offline algorithm, which we know has optimal response time. However, because the flows are batched by $\response{}$ lengths of time, they may be scheduled up to $2\response{}$ time from the time when they first arrived.
    
    We now show that $\mathcal{A}_{MRT}$ uses up to $2(\capacity{p}+2\size{\max}-1)$ capacity on each port $p$. Consider Figure~\ref{fig:onlinemaxresponse}. If we can show that at most two batches ever overlap, then the result follows from the fact that the offline solution uses no more than $\capacity{p}+2\size{\max}-1$ capacity on each port $p$. So suppose that three batches overlap in some round $t$. We further suppose that $t$ is the earliest round in which three batches overlap. Let $t_1$ be the round in which the first batch started and $t_1'$ the round in which it ended. Let $t_2$ be the rounds in which the second batch started. Since the guessed $\response{}'$ is monotonically nondecreasing, we have $t - t_2 \ge t_2 - t_1$. Since the max response time within any batch is at most double the guessed response time, we also have $t_1' < t_1 + 2(t_2 - t_1)$. Combining these inequalities gives $t_1' < t$, which contradicts our assumption.
\end{proof}
\fi
    
    \subsection{Experimental Results}
    \label{sec:experiments}
    In this section, we describe experiments conducted to evaluate the practical performance of several natural heuristics in on-line scheduling flows over a switch. In these experiments, we measure the average ($\frac{1}{n}\sum \rho_i$) and maximum ($\max \rho_i$) response times of flows. As noted before, the latter objective is predictive of the quality of service (QoS) as perceived by the user: by minimizing the maximum response time, we ensure that no job takes too long to complete. One must keep in mind, however, that optimizing for maximum response time may come at the cost of increased average response time which becomes more relevant when users submit batch jobs.

In the case of average response time, we compare the performance of these heuristics to the optimal value of the linear program (\ref{eq:lp_gk_s})-(\ref{eq:lp_gk_f}) presented in \S\ref{sec:approxmaxresponse}. On the other hand, in the case of maximum response time, we compare the performance of these heuristics to the optimal value of the linear program (\ref{LP:capacity})-(\ref{LP:nonneg}) presented in \S\ref{sec:linear_prog_approach}. Since these LPs give lower bounds on the optimal values of any schedule, they provides us with bases for evaluating the heuristics.

\subsubsection{Methodology}
Packet-level simulators, such as \textsc{ns2}, are not suitable for flow simulation due to the large number of packets generated by each flow which makes the model infeasible. Hence, we have developed an in-house simulator for online flow scheduling of flows over a non-blocking switch.

Specifically, we use a $150 \times 150$ switch with unit port capacities. This switch models a 3000-machine cluster with 150 racks and a total bisection bandwidth of 300Gbps. Thus, each port has a capacity of 1Gbps or 128MBps. Moreover, by setting each time unit to be 1/128 second, each port has a capacity of 1MB per time unit. 

Our simulator maintains a $(150, 150)$ bipartite graph $G_t$ throughout the simulation, where $t$ denotes the time step. The edges in $G_t$ consist of those edges (flows) released at time $t$ plus the ones remaining from previous steps. In other words $E(G_t)$ is the set of released edges waiting to be scheduled. Any heuristic can be plugged in to extract a bipartite matching $M_t\subseteq E(G_t)$. Edges in $M_t$ are assigned to run in time window $t$ to $t+1$. Note that the edges waiting at a particular port form an open queue in the sense that any edge can be selected to run (as opposed to the edge at the front being the only available one).

In each instance of the experiment, flows are generated randomly controlled by two parameters $M$ the average number of flows released per time unit, and $T$ the number of steps during which the flows are generated. More precisely, for each time unit $t=0,.., T-1$, a Poisson distribution of mean $M$ is used to generate flows released at time $t$. For each such flow, an input port and an output port is selected uniformly at random. Note that $M=150$ means that at each port, on average, there is one new flow per time step. Similarly, the average number of new flows per port is 2 and 4 for $M=300$ and $M=600$ respectively.

In our experiments, we compare the following three heuristics.
\begin{mylowitemize} 
\item \textbf{\textsc{MaxCard}}: at every step a matching of maximum cardinality is extracted from $G_t$. This heuristic is guaranteed to keep the largest number of ports busy during each step. We expect a good performance for $\frac{1}{n}\sum \rho_i$ since port utilization is kept at its max, but not for $\max \rho_i$ since it does not distinguish between edges.
\item \textbf{\textsc{MinRTime}}: at every step $t$, each edge $e$ gets assigned a weight equal to $t-r_e$, where $r_e$ is the edge's (flow's) release time. Next, a matching of maximum weight is extracted from $G_t$, where the weight of an edge is the length of time since its release. We expect a good performance for $\max \rho_i$ since the longer an edge has been waiting the higher is its priority. On the other hand, $\frac{1}{n}\sum \rho_i$ may be high due to sub-optimal port utilization.
\item \textbf{\textsc{MaxWeight}}: at every step, each edge gets assigned a weight equal to the sum of queue sizes at its two endpoints. In other words, the weight of an edge is the number of edges incident to its endpoints. Next, a matching of maximum weight is extracted from $G_t$.  Note that the queue size at a port $p$ is the number of released but unscheduled edges having $p$ as an endpoint. We expect this heuristic to perform well for both objectives.
\end{mylowitemize}

Simulations are performed for various values of $M$ and $T$. Specifically, we fix $M\in\{50, 100, 150, 300, 600\}$ and run the simulator for $T\in\{10, 12, 14, 16, 18, 20, 40, 60, 80, 100\}$. Each result is the average of 10 tries. The linear programs are solved only for $T\in\{10, 12, 14, 16, 18, 20\}$ to avoid prohibitively long execution times: even for $M=600$, and $T=20$, each run takes more than 3 hours on an Intel Core-i7 6700HQ machine with 16GB of RAM.

\subsubsection{Implementation}
We implemented the simulator and its tools in C++. We use Lemon 1.3.1 library for various graph algorithms such as traversals and matchings. The \texttt{default\_random\_engine} was used for the distributions. The linear program is modelled and solved using Gurobi 8.1.  In the case of maximum response time, we used a binary-search scheme with the linear program in (\ref{LP:capacity})-(\ref{LP:nonneg}) for finding the minimum feasible response time. The starting point of the binary search is set to the best of the three heuristics.

\subsubsection{Performance}
Figure \ref{fig:avgresponsetime_color}, on page \pageref{fig:avgresponsetime_color}, shows our findings for average response time. The results are compared against the optimal value of the linear program (\ref{eq:lp_gk_s})-(\ref{eq:lp_gk_f}) which provides a lower bound on the optimal average response time. As predicted, overall, \textsc{MaxWeight} and \textsc{MinRTime} are the best and the worst heuristic respectively. However, as the average number of incoming flows $M$ (and hence the congestion) grows, they start to perform very similarly. Curiously, in every scenario, the performance of the the heuristics is within a factor 2 of the linear program. Moreover, the gap seems to close for larger values of $M$.

Figure \ref{fig:maxresponsetime_color}, on page \pageref{fig:maxresponsetime_color}, shows the results for maximum response time. Again, the findings confirm our initial intuition. In particular, \textsc{MinRTime} has consistently the best performance (it almost matches the LP lower bound in some cases). On the other hand, \textsc{MaxWeight} is the worst of the three. Again, all heuristics are always within a factor 2.5 of the LP. Unlike the average case, the gap between the heuristics seems to grow with $M$.

Our conclusion is that \textsc{MaxCard} and \textsc{MinRTime} are good choices for minimizing average response time and minimizing maximum response time respectively. \textsc{MaxWeight} takes the middle ground and is thus the best choice (among the three) when it is desirable to keep both average and maximum response times low.

\section{Open Problems}
\label{sec:open}
We have presented approximation algorithms for minimizing response time metrics in flow scheduling over a switch network.  Our work offers a number of directions for future research.

\noindent \textbf{Improved approximation ratios.} For average response time, our algorithm achieves an $O(\log n/c)$-approximation while incurring a $1 + c$ augmentation in capacity, for any given positive integer $c$.  While resource augmentation is necessary for any competitive algorithm in the online setting, does an offline approximation (with say a polylogarithmic approximation ratio) need resource augmentation?  For maximum response time, our algorithm achieves the optimal objective while incurring an increase in capacity by the size of the maximum demand.  An important open problem is to determine whether we need resource augmentation to obtain any reasonable approximation algorithm for maximum response time.  
\iflong
A technical hurdle in achieving a good approximation without resource augmentation is the following intriguing question, which originates from the iterative rounding approach.  What is the maximum response time achievable for a sequence of unit flow requests represented by bipartite graphs $G_1, G_2, \ldots, G_T$ which satisfy the following condition: for any interval $I$ and any port $v$, the sum, over all $i$ in $I$, of the degrees of $v$ in $G_i$ is at most $|I| + 1$?  That is, in the preceding sub-class of instances, all the requests can be satisfied with response time of 1, assuming an absolutely minimal resource augmentation (of plus 1).  Without any capacity augmentation, can every request be satisfied with a constant response time?  An affirmative answer to this question will likely lead to a compelling approximation algorithm for response time metrics.
\fi 

\noindent \textbf{Competitive online algorithms.}
Our work on online algorithms is preliminary and provides some guidance on heuristics one can use for response-time related metrics.  While we have given a constant-competitive algorithm for maximum response time with constant-factor resource augmentation, the situation with no resource augmentation is unclear.  We plan to conduct a more thorough investigation of online algorithms -- both theoretical and experimental.

\noindent \textbf{Generalizations and beyond worst-case analysis.}
Our work has focused on scheduling flows on switch networks. We would like to extend our research to a broader class of datacenter networks (e.g., trees, fat-trees, more general networks) and more general types of flows (e.g., co-flows).  We would also like to study the problems posed in a model that includes some information about the distribution of input instances that may be available from practical applications.  This would be especially useful for the average response time objective, for which no non-trivial competitive ratio is achievable without resource augmentation.  
\section*{Acknowledgments}
This work was partially supported by NSF grant CCF-1909363.  We would like to thank Janardhan Kulkarni for the many discussions on online flow scheduling, and for generously allowing us to include his 
proof of Lemma~\ref{lem:online_average_lower}.

\ifsigconf
\newpage
\onecolumn
\appendix
\section{Figures Showing Experimental Results}

\begin{figure*}[ht]
    \centering
    \includegraphics[scale=0.277]{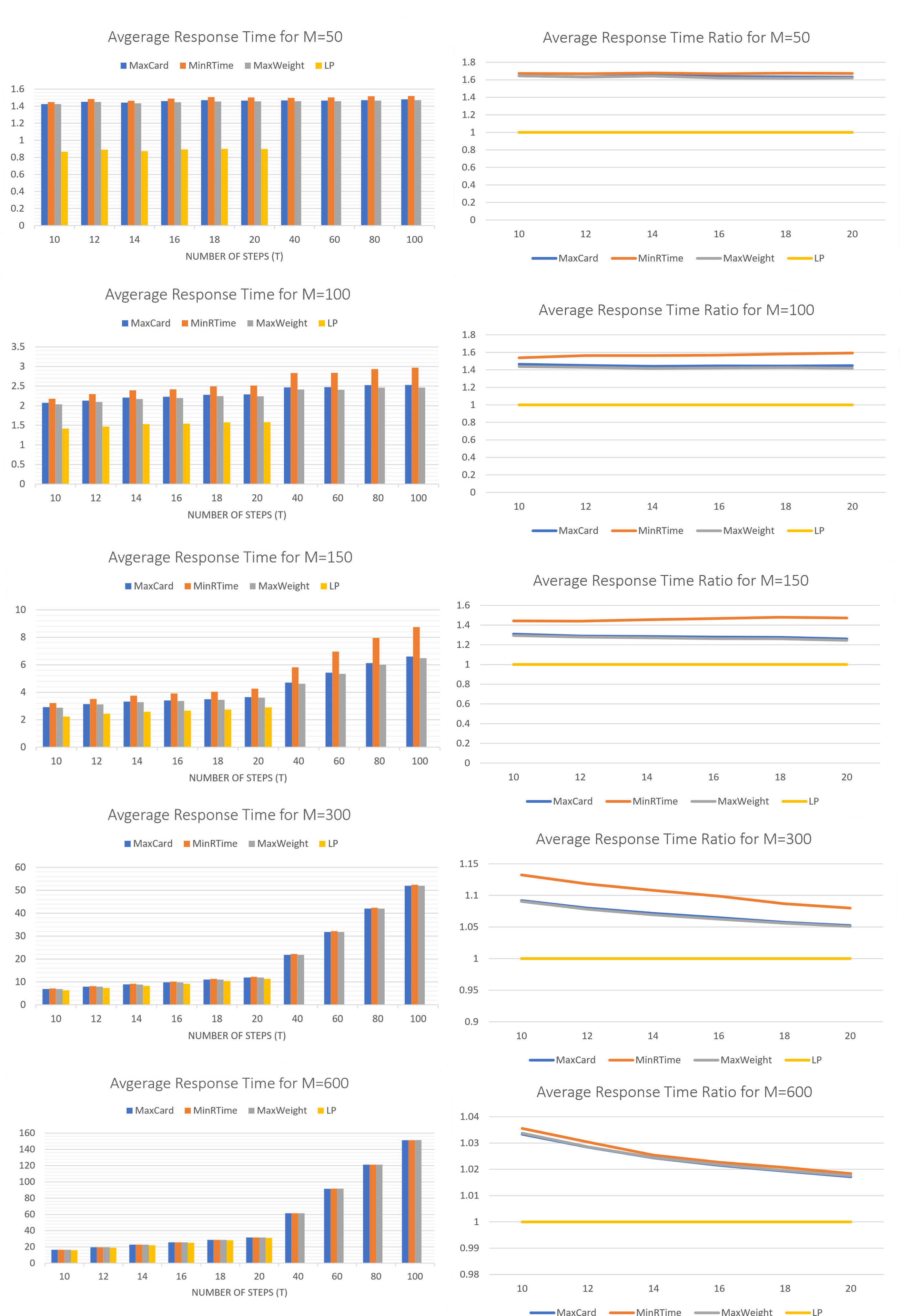}
    \caption{Average response time results.}
    \label{fig:avgresponsetime_color}
\end{figure*}

\begin{figure*}[ht]
    \centering
    \includegraphics[scale=0.3]{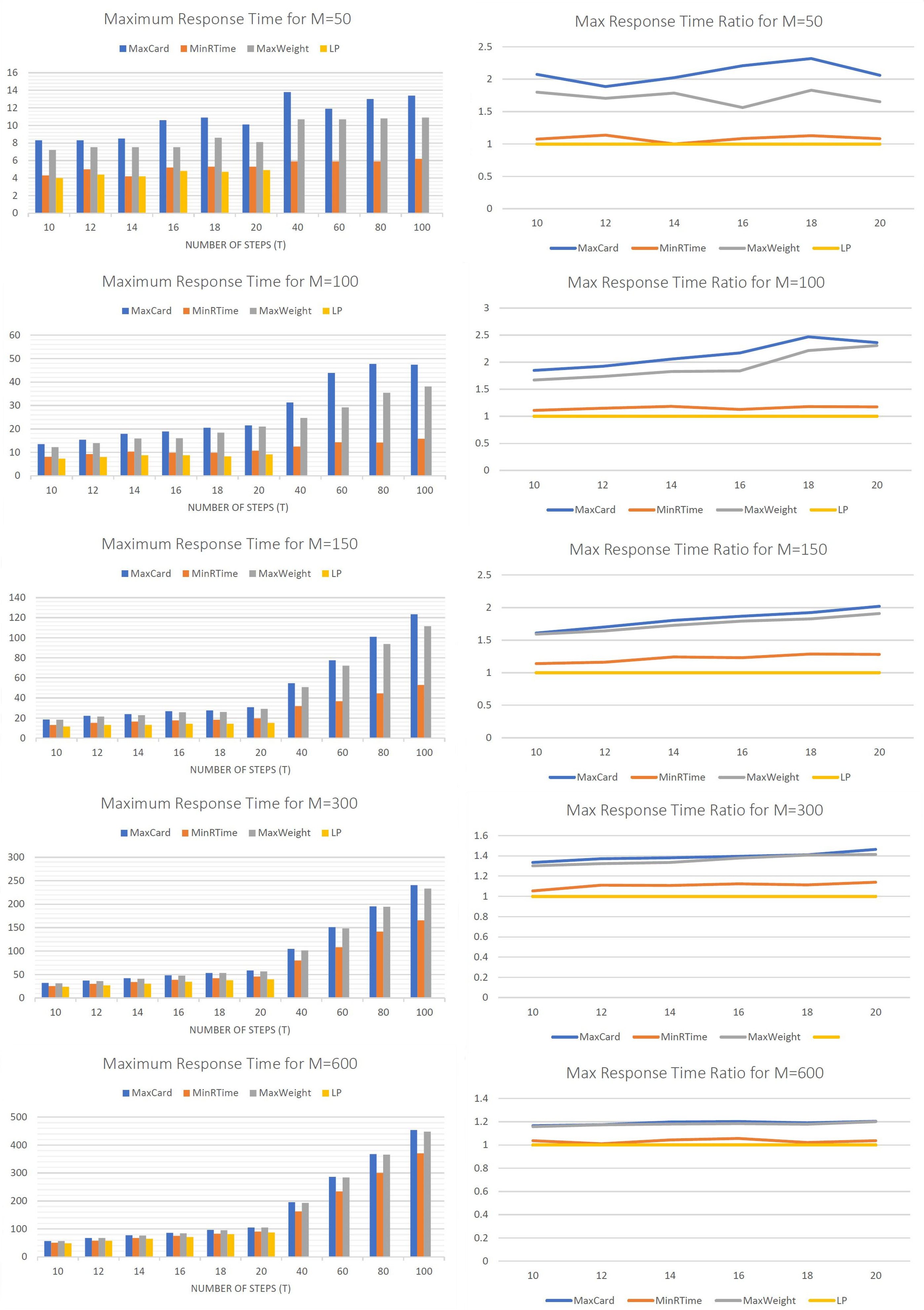}
    \caption{Maximum response time results.}
    \label{fig:maxresponsetime_color}
\end{figure*}

\twocolumn

\else

\begin{figure*}[p]
    \centering
    \includegraphics[scale=0.28]{Figures/avgComb.jpg}
    \caption{Average response time experimental results.}
    \label{fig:avgresponsetime_color}
\end{figure*}

\begin{figure*}[p]
    \centering
    \includegraphics[scale=0.28]{Figures/maxComb.jpg}
    \caption{Maximum response time experimental results.}
    \label{fig:maxresponsetime_color}
\end{figure*}

\fi

\newpage
\bibliographystyle{plain}
\bibliography{refs.bib}

\end{document}